\documentclass[letterpaper, 10 pt, conference]{ieeeconf}
\IEEEoverridecommandlockouts
\let\labelindent\relax
\usepackage{enumitem}
\usepackage{balance}
\usepackage[font={small}]{caption}
\usepackage{subcaption}
\usepackage{array}
\usepackage{textcomp}
\usepackage{mathtools, nccmath}
\usepackage{graphicx}
\usepackage{amsfonts}
\usepackage{amsmath}
\usepackage{amssymb}
\usepackage{algorithm}
\usepackage{algorithmic}
\usepackage{hyperref}
\usepackage{tikz}
\usepackage{arydshln}
\usepackage{multirow}
\usepackage{bm}
\usepackage{epstopdf}
\usepackage{cite}
\usepackage{siunitx}
\usepackage{enumitem}

\usepackage{amsthm}

\newtheorem{theorem}{Theorem}
\newtheorem{lemma}{Lemma}

\newtheorem{remark}{Remark}
\theoremstyle{definition}
\newtheorem{definition}{Definition}
\newtheorem{problem}{Problem}

\newtheorem{proposition}{Proposition}

\title{\LARGE \bf
    Event-Triggered Adaptive Taylor–Lagrange Control \\for Safety-Critical Systems}

\author{Shuo Liu$^{1}$, Wei Xiao$^{2}$, Christos G. Cassandras$^{3}$ and Calin A. Belta$^{4}$
\thanks{This work was supported in part by the NSF under grant IIS-2024606 at Boston University and by a Brendan Iribe endowed professorship at the University of Maryland.}
\thanks{$^{1}$S. Liu is with the Department of Mechanical Engineering, Boston
University, Brookline, MA, USA. 
        {\tt\small liushuo@bu.edu}}%
\thanks{$^{2}$W. Xiao is with Department of  Robotics Engineering, Worcester Polytechnic Institute and MIT CSAIL, MA, USA.
        {\tt\small weixy@mit.edu}}%
\thanks{$^{3}$C.G. Cassandras is with the Division of Systems Engineering, Boston
University, USA.
        {\tt\small cgc@bu.edu}}%
\thanks{$^{4}$C. Belta is with the Department of Electrical and Computer Engineering and the Department of Computer Science, University of Maryland, College Park, MD, USA. 
        {\tt\small cbelta@umd.edu}}%
}

\begin{document} 
\maketitle

\begin{abstract}
This paper studies safety-critical control for nonlinear systems under sampled-data implementations of the controller.
The recently proposed Taylor--Lagrange Control (TLC) method provides rigorous safety guarantees but relies on a fixed discretization-related parameter, which can lead to infeasibility or unsafety in the presence of input constraints and inter-sampling effects. To address these limitations, we propose an adaptive Taylor--Lagrange Control (aTLC) framework with an event-triggered implementation, where the discretization-related parameter defines the discretization time scale and is selected online as state-dependent rather than fixed. This enables the controller to dynamically balance feasibility and safety by adjusting the effective time scale of the Taylor expansion. The resulting controller is implemented as a sequence of Quadratic Programs (QPs)
 with input constraints. We further introduce a selection rule to choose the discretization-related parameter from a finite candidate set, favoring feasible inputs and improved safety.
Simulation results on an adaptive cruise control (ACC) problem demonstrate that the proposed approach improves feasibility, guarantees safety, and achieves smoother control actions compared to TLC while requiring a single automatically tuned parameter.
\end{abstract}

\section{Introduction}
\label{sec:Introduction}
Ensuring stability while optimizing cost under safety constraints remains a fundamental challenge in autonomous systems. 
A key difficulty lies in simultaneously achieving computational tractability and rigorous safety guarantees, especially for nonlinear dynamics. 

Existing approaches can be broadly categorized into optimization-based and verification-based methods. Optimization-based approaches, such as classical optimal control and dynamic programming techniques \cite{kirk2004optimal,arthur1975applied,bellman1966dynamic,denardo2012dynamic}, 
are often tailored to linear systems or suffer from the curse of dimensionality, limiting their applicability to complex nonlinear systems. Model Predictive Control (MPC) \cite{garcia1989model,rawlings2020model} offers a practical framework for safety-critical control, but nonlinear MPC formulations are computationally intensive, and their linearized approximations may compromise safety guarantees.
Verification-based approaches, on the other hand, such as reachability-based methods \cite{aubin2011viability,mitchell2005time}, provide strong theoretical guarantees for safety, yet their computational cost remains prohibitive for real-time implementation. To bridge this gap, barrier-based methods have emerged as a promising alternative, offering a computationally efficient way to enforce safety constraints in nonlinear systems.

Barrier functions (BFs) have been widely used in optimization to handle inequality constraints, for instance by incorporating reciprocal barrier terms into the cost function \cite{boyd2004convex}. They have also been adopted in learning-based frameworks, such as safe Reinforcement Learning (RL) \cite{cheng2019end}, to encourage safety during training. 
However, in these formulations, safety is typically encoded as part of the cost or reward, which leads to soft constraint enforcement and does not provide strict safety guarantees.
Alternatively, barrier functions have been employed as Lyapunov-like certificates \cite{wieland2007constructive} for system verification and control \cite{tee2009barrier,prajna2007framework,wisniewski2015converse}, enabling the characterization of safe invariant sets. However, these methods typically focus on safety verification rather than control synthesis, which limits their direct applicability in real-time control design under input constraints.

Control BFs (CBFs) extend barrier functions by explicitly incorporating control inputs to enforce forward invariance of safe sets for affine control systems. If a CBF satisfies certain Lyapunov-like conditions, safety can be guaranteed in the sense of set forward invariance~\cite{ames2016control}. By combining CBFs with Control Lyapunov Functions (CLFs), the CBF-CLF-QP framework formulates safety-critical control as a sequence of Quadratic Programs (QPs)~\cite{ames2012control, ames2016control}, enabling real-time implementation. Extensions of this framework have been developed to handle high-relative-degree constraints and adaptive control scenarios \cite{nguyen2016exponential, xiao2021high, xiao2021adaptive, liu2023auxiliary, liu2024auxiliary}. CBFs have also been integrated with RL to ensure safety~\cite{ahmad2025hierarchical}.
However, existing CBF-based methods exhibit several limitations. First, the use of class $\cal K$ 
functions can introduce conservativeness, since the CBF condition constitutes only a sufficient condition for safety and may overly restrict the set of admissible control inputs. Second, these methods require the selection of class $\cal K$ functions, which introduces additional parameters that are often difficult to tune in practice. This challenge is exacerbated in high-order CBF formulations \cite{nguyen2016exponential, xiao2021high}, where multiple class $\cal K$ functions must be specified, further increasing the tuning burden.

The recently proposed Taylor–Lagrange Control (TLC) method~\cite{xiao2025taylor} ensures system safety by leveraging Taylor’s theorem with Lagrange remainder~\cite{taylor1717methodus,de1813theorie}. Unlike CBF-based approaches, TLC provides a \emph{necessary and sufficient} condition for safety while introducing significantly fewer parameters, typically only one parameter. Moreover, similar to CBF-based methods, TLC leads to a QP formulation, enabling efficient real-time implementation.
To address the inter-sampling issue, i.e., the potential violation of safety constraints between discrete update instants when control inputs are held constant, a robust variant of TLC (rTLC) \cite{xiao2026robust} has been proposed to guarantee safety over the entire inter-event interval. 
However, existing TLC-based approaches rely on manually tuned parameters that are kept constant over time. This can lead to infeasibility of the resulting QP, particularly in the presence of tight control bounds, due to conflicts between the TLC safety constraints and input constraints. To address this issue, we propose Adaptive TLC (aTLC) to safety-critical control problems. Specifically, the contributions of this paper are as follows:
\begin{itemize}
    \item An adaptive Taylor--Lagrange Control (aTLC) framework that defines the discretization time scale as a state-dependent variable selected online, enabling improved feasibility.
    
    \item An event-triggered implementation of aTLC, where control updates are performed only when the system state exits a prescribed neighborhood, mitigating inter-sampling effects while maintaining safety guarantees.
    
    \item A value-function-based characterization of feasibility via the margin function and the properties of the minimal feasible discretization-related parameter. Based on this insight, we develop a rollout-based adaptive selection rule that chooses the feasible parameter from a finite candidate set, improving safety while maintaining feasibility.

    \item Simulation results on an adaptive cruise control (ACC) problem. We demonstrate that the proposed method achieves improved feasibility, guaranteed safety, and smoother control actions compared to non-adaptive TLC.
\end{itemize}

\section{Definitions and Preliminaries}
\label{sec:Preliminaries}

Consider an affine control system of the form
\begin{equation}
\label{eq:affine-control-system}
\dot{\boldsymbol{x}}=f(\boldsymbol{x})+g(\boldsymbol{x})\boldsymbol{u},
\end{equation}
 where $\boldsymbol{x}\in \mathbb{R}^{n}, f:\mathbb{R}^{n}\to\mathbb{R}^{n}$ and $g:\mathbb{R}^{n}\to\mathbb{R}^{n\times q}$ are locally Lipschitz, and $\boldsymbol{u}\in \mathcal U\subset \mathbb{R}^{q}$, where $\mathcal U$ denotes the control limitation set, which is assumed to be in the form: 
\begin{equation}
\label{eq:control-constraint}
\mathcal U \coloneqq \{\boldsymbol{u}\in \mathbb{R}^{q}:\boldsymbol{u}_{min}\le \boldsymbol{u} \le \boldsymbol{u}_{max} \}, 
\end{equation}
with $\boldsymbol{u}_{min},\boldsymbol{u}_{max}\in \mathbb{R}^{q}$ (vector inequalities are interpreted componentwise). We assume that no component of $\boldsymbol{u}_{min}$ and $\boldsymbol{u}_{max}$ can be infinite. 

\begin{definition}[Class $\cal K$ function~\cite{Khalil:1173048}]
\label{def:class-k-f}
A continuous function $\alpha:[0,a)\to[0,+\infty],a>0$ is called a class $\cal K$ function if it is strictly increasing and $\alpha(0)=0.$
\end{definition}

\begin{definition}
\label{def:forward-inv}
A set $\mathcal C\subset \mathbb{R}^{n}$ is forward invariant for system \eqref{eq:affine-control-system} if its solutions for some $\boldsymbol{u} \in \mathcal U$ starting from any $\boldsymbol{x}(0) \in \mathcal C$ satisfy $\boldsymbol{x}(t) \in \mathcal C, \forall t \ge 0.$
\end{definition}

\begin{definition}
\label{def:relative-degree}
The relative degree of a differentiable function $h:\mathbb{R}^{n}\to\mathbb{R}$ is the minimum number of times we need to differentiate it along dynamics \eqref{eq:affine-control-system} until any component of $\boldsymbol{u}$ explicitly shows in the corresponding derivative. 
\end{definition}

In this paper, the \textbf{safety requirement} is defined by the constraint $h(\boldsymbol{x}) \ge 0$, and \textbf{safety} refers to the forward invariance of the set
\begin{equation}
\label{eq: safety set}
\mathcal{C} \coloneqq \{\boldsymbol{x} \in \mathbb{R}^n : h(\boldsymbol{x}) \ge 0\}.
\end{equation}
The relative degree of $h$ is thus referred to as the relative degree of the safety requirement.


\begin{definition}[Taylor--Lagrange Control (TLC)~\cite{xiao2025taylor}]
\label{def: TLC}
A continuously differentiable function $h:\mathbb{R}^n \to \mathbb{R}$ 
is called a Taylor--Lagrange Control (TLC) function of relative degree $m$ 
for system~\eqref{eq:affine-control-system} if
\begin{equation}
\label{eq: TLC condition}
\begin{aligned}
\sup_{\boldsymbol{u}(\xi)\in \mathcal{U}} \Bigg[
& \sum_{k=0}^{m-1} \frac{L_f^k h(\boldsymbol{x}(t_0))}{k!}(t-t_0)^k + \frac{L_f^m h(\boldsymbol{x}(\xi))}{m!}(t-t_0)^m \\
& + \frac{L_g L_f^{m-1} h(\boldsymbol{x}(\xi))\,\boldsymbol{u}(\xi)}{m!}(t-t_0)^m
\Bigg] \ge 0,
\end{aligned}
\end{equation}
for all $\boldsymbol{x}(t_0)\in \mathcal{C}$, $t_0 \in [0,\infty)$, and $\xi \in (t_0,t)$.
Here, $L_f h$ and $L_g h$ denote the Lie derivatives of $h$ along $f$ and $g$, respectively.
\end{definition}

\begin{theorem}[~\cite{xiao2025taylor}]
Let $h(\boldsymbol{x})$ be a TLC function as defined in Def.~\ref{def: TLC}, and let the corresponding safe set $\mathcal{C}$ be defined as in~\eqref{eq: safety set}. 
If $h(\boldsymbol{x}(t_0)) \ge 0$, then any Lipschitz continuous control input 
$\boldsymbol{u}(\xi)$ that satisfies the TLC condition in Def.~\ref{def: TLC}, $\xi \in (t_0,t)$, $t>t_0$ renders the set 
$\mathcal{C}$ forward invariant for system~\eqref{eq:affine-control-system}.
\end{theorem}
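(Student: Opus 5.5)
The plan is to read the TLC condition as a Taylor expansion of $t \mapsto h(\boldsymbol{x}(t))$ about $t_0$ with Lagrange remainder. Along any Lipschitz input $\boldsymbol{u}$, the closed-loop solution of \eqref{eq:affine-control-system} exists locally and is $C^1$. Because $f,g$ are locally Lipschitz and $h$ has relative degree $m$, the first $m-1$ time derivatives of $h(\boldsymbol{x}(t))$ do not involve $\boldsymbol{u}$, and they equal $L_f^k h(\boldsymbol{x}(t))$ for $k=0,\dots,m-1$. The $m$-th derivative is $L_f^m h(\boldsymbol{x}(t)) + L_gL_f^{m-1}h(\boldsymbol{x}(t))\boldsymbol{u}(t)$. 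This requires enough smoothness of $h$, $f$, $g$ for these Lie derivatives to exist; I will assume it as implicit in Def.~\ref{def: TLC}. The $m$-th derivative is then continuous, since $\boldsymbol{u}$ is Lipschitz.

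Next I would apply Taylor's theorem with Lagrange remainder to the scalar function $\phi(t)=h(\boldsymbol{x}(t))$ on $[t_0,t]$. It gives, for some $\xi\in(t_0,t)$,
\[
h(\boldsymbol{x}(t)) = \sum_{k=0}^{m-1}\frac{L_f^k h(\boldsymbol{x}(t_0))}{k!}(t-t_0)^k + \frac{L_f^m h(\boldsymbol{x}(\xi)) + L_gL_f^{m-1}h(\boldsymbol{x}(\xi))\boldsymbol{u}(\xi)}{m!}(t-t_0)^m .
\]
The hypothesis states that the bracket in \eqref{eq: TLC condition}, evaluated at the applied $\boldsymbol{u}(\xi)$, is nonnegative for every $\xi\in(t_0,t)$ whenever $\boldsymbol{x}(t_0)\in\mathcal C$. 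In particular it holds at the unknown intermediate point $\xi$ produced by Taylor's theorem. Hence $h(\boldsymbol{x}(t))\ge 0$ for every $t>t_0$ in the existence interval, i.e.\ $\boldsymbol{x}(t)\in\mathcal C$.

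Finally I would extend this to all $t\ge 0$. The first way is to use that $t_0$ is arbitrary in $[0,\infty)$: take $t_0=0$ and let $t$ range over the maximal interval of existence. The second way, if the condition is only meant on short windows, is a continuation argument. Let $\tau=\sup\{s: \boldsymbol{x}(r)\in\mathcal C \ \forall r\in[0,s]\}$; by closedness of $\mathcal C$ and continuity, $\boldsymbol{x}(\tau)\in\mathcal C$. Restarting at $t_0=\tau$ extends safety past $\tau$, which contradicts finiteness of $\tau$.

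The main obstacle is the quantifier over $\xi$. Taylor's theorem only supplies existence of some $\xi$, which depends on the trajectory and hence on $\boldsymbol{u}$ itself. The proof therefore needs the condition to hold uniformly for all $\xi\in(t_0,t)$ with the actually applied input, not merely the supremum over $\mathcal U$ at some point. I would state explicitly that ``$\boldsymbol{u}$ satisfies the TLC condition'' means the bracket is nonnegative for the applied $\boldsymbol{u}(\xi)$ at every $\xi$, which closes this gap.

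A secondary point concerns completeness of solutions, since finite escape time would limit the argument to the maximal existence interval. I would either assume forward completeness, or note that invariance is claimed on the interval where the solution exists.
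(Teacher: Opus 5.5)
Your proposal is correct and follows essentially the same route as the paper: Taylor's theorem with Lagrange remainder identifies the bracket in \eqref{eq: TLC condition}, evaluated along the applied input, with $h(\boldsymbol{x}(t))$, so the condition gives $h(\boldsymbol{x}(t))\ge 0$, which is then propagated over time. Your explicit treatment of the quantifier over $\xi$, the needed $C^m$ smoothness, completeness, and the supremum-based continuation argument makes rigorous what the paper states only informally as the argument being ``recursively applied over time.''
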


It follows from Taylor's theorem with Lagrange remainder \cite{taylor1717methodus,de1813theorie} that the expression inside the supremum in \eqref{eq: TLC condition} is exactly equal to $h(\boldsymbol{x}(t))$. Therefore, if there exists a control input $\boldsymbol{u}(\xi)\in \mathcal{U}$ such that $h(\boldsymbol{x}(t)) \ge 0$, then condition \eqref{eq: TLC condition} is satisfied. Conversely, if \eqref{eq: TLC condition} holds, then $h(\boldsymbol{x}(t)) \ge 0$ is guaranteed. Hence, \eqref{eq: TLC condition} provides a \textit{necessary and sufficient} condition for the safety requirement.

In contrast, the $m$-th order condition in High-Order Control Barrier Functions (HOCBFs) [\text{Eq}. (13), \cite{xiao2021high}] involves $m$ class $\cal K$ functions, introducing additional design degrees of freedom and rendering the condition sufficient, \emph{but not necessary}, for safety. Moreover, these functions require tuning multiple parameters in practice.
On the other hand, the $m$-th order TLC condition in \eqref{eq: TLC condition} contains only a single implicit parameter, namely the intermediate point $\xi$ associated with the Lagrange remainder or the time scale $(t-t_0)$, thus avoiding multiple tuning parameters.

\begin{definition}[CLF~\cite{ames2012control}]
\label{def:control-l-f}
A continuously differentiable function $V:\mathbb{R}^{n}\to\mathbb{R}$ is an exponentially stabilizing Control Lyapunov Function (CLF) for system \eqref{eq:affine-control-system} if there exist constants $c_{1}>0, c_{2}>0,c_{3}>0$ such that for $\forall \boldsymbol{x} \in \mathbb{R}^{n}, c_{1}\left \|  \boldsymbol{x} \right \| ^{2} \le V(\boldsymbol{x}) \le c_{2}\left \|  \boldsymbol{x} \right \| ^{2}$ and
\begin{equation}
\label{eq:clf}
\inf_{\boldsymbol{u}\in \mathcal U}[L_{f}V(\boldsymbol{x})+L_{g}V(\boldsymbol{x})\boldsymbol{u}+c_{3}V(\boldsymbol{x})]\le 0.
\end{equation}
\end{definition}

Several works (e.g., \cite{nguyen2016exponential, xiao2021high}) address safety-critical control by integrating HOCBFs with quadratic cost objectives, resulting in Optimal Control Problems (OCPs) for systems with high relative degree. 
In practice, these OCPs are implemented in real time through a sequence of QPs. In these frameworks, HOCBF constraints ensure forward invariance of the safe set, while CLFs~\eqref{eq:clf} can be incorporated as soft constraints to enforce exponential convergence to desired states \cite{xiao2021high}.
Similarly, the TLC condition \eqref{eq: TLC condition}
can be employed to enforce safety within a QP framework. By combining TLC-based safety constraints with CLF-based objectives, one can simultaneously guarantee safety and achieve exponential convergence of the desired states.

\section{Problem Formulation and Approach}
\label{sec:Problem Formulation and Approach}
Our goal is to generate a control strategy for system \eqref{eq:affine-control-system} 
that ensures convergence of the system state to a desired equilibrium, 
minimizes control effort, satisfies safety requirements, and respects input constraints.

\textbf{Objective:} We consider the cost  
\begin{equation}
\label{eq:cost-function-1}
\begin{split}
 J(\boldsymbol{u}(t))=\int_{0}^{T} 
 \| \boldsymbol{u}(t) \| ^{2}dt+p\left \| \boldsymbol{x}(T)-\boldsymbol{x}_{e} \right \| ^{2},
\end{split}
\end{equation}
where $\|\cdot\|$ denotes the Euclidean norm, $T>0$ is the terminal time, $p>0$ is a weighting factor, and $\boldsymbol{x}_e \in \mathbb{R}^n$ is the desired equilibrium state of system~\eqref{eq:affine-control-system}. The cost term $p\|\boldsymbol{x}(T)-\boldsymbol{x}_e\|^2$ promotes convergence of the state to $\boldsymbol{x}_e$.

\textbf{Safety Requirement:} System \eqref{eq:affine-control-system} should always satisfy one or more safety requirements of the form: 
\begin{equation}
\label{eq:Safety constraint}
h(\boldsymbol{x})\ge 0, \boldsymbol{x} \in \mathbb{R}^{n}, \forall t \in [0, T],
\end{equation}
where $h:\mathbb{R}^{n}\to\mathbb{R}$ is assumed to be a continuously differentiable equation. 

\textbf{Control Limitations:} The controller $\boldsymbol{u}$ should always satisfy \eqref{eq:control-constraint} for all $t \in [0, T].$

A control policy is \textbf{feasible} if \eqref{eq:Safety constraint} and \eqref{eq:control-constraint} are satisfied $\forall t \in [0, T].$ In this paper, we consider the following problem:

\begin{problem}
\label{prob:smooth-prob}
Find a feasible control policy for system \eqref{eq:affine-control-system} such that cost \eqref{eq:cost-function-1} is minimized.
\end{problem}

Existing TLC \cite{xiao2025taylor} and its robust variant (rTLC) \cite{xiao2026robust} are both implemented by solving a QP at discrete update instants under event-triggered or sampled-data execution. While event-triggered TLC \cite{xiao2025taylor} provides safety guarantees and rTLC mitigates inter-sampling effects, both rely on a fixed time scale $(t-t_{0})$ in \eqref{eq: TLC condition} that is manually selected and kept constant. Such a fixed choice cannot adapt to the current state or the available control authority. If chosen too aggressively, the resulting TLC/rTLC safety condition may become overly restrictive and conflict with input bounds, leading to infeasibility of QP; if chosen too conservative, it may fail to provide sufficient robustness margin against inter-sampling deviations, especially near the boundary of the safe set. Hence, a constant time scale can degrade both optimization feasibility and implementation-level safety, motivating the need for an adaptive TLC approach.

\textbf{Approach:} 
To solve Problem \ref{prob:smooth-prob} and address the limitations of non-adaptive TLC, we introduce an adaptive framework in which the time scale $(t-t_{0})$ is treated as a state-dependent adaptive variable selected online. To enforce convergence to the desired state, we select a CLF and impose the corresponding CLF constraint \eqref{eq:clf}. To satisfy the safety requirement, we construct an aTLC function and convert it into the corresponding aTLC constraint. The CLF and aTLC constraints are jointly imposed in a QP with input bounds. 
Within the aTLC formulation, the time scale is allowed to vary and is selected from a finite candidate set at each update instant, improving feasibility under input constraints. This adaptive scheme is combined with an event-triggered implementation, where control updates are executed only when the state exits a prescribed neighborhood, thereby ensuring safety over inter-sampling intervals. 

\section{Adaptive Taylor–Lagrange Control}
\label{sec:ATLC}
In this section, we develop an adaptive TLC (aTLC) framework. 
The key idea is to treat the time scale appearing in the Taylor--Lagrange expansion 
as a state-dependent parameter that can be adjusted online to improve 
feasibility and mitigate inter-sampling effects.

\begin{definition}[Adaptive Taylor--Lagrange Control (aTLC)]
\label{def:ATLC}
Consider system \eqref{eq:affine-control-system} with safety requirement $h(\boldsymbol{x})\ge 0$, where
$h$ has relative degree $m$. For any state $\boldsymbol{x}(t_0)$ and any time
scale $\tau \in [\tau_{\min},\tau_{\max}]$, 
define the $\tau$-parameterized adaptive TLC
condition as
\begin{equation}
\begin{aligned}
\sup_{\boldsymbol{u}(\xi)\in U}
\Bigg[
&\sum_{i=0}^{m-1}\frac{L_f^i h(\boldsymbol{x}(t_0))}{i!}\tau^i +\frac{L_f^m h(\boldsymbol{x}(\xi))}{m!}\tau^m+\\
&+
\frac{L_gL_f^{m-1}h(\boldsymbol{x}(\xi))\boldsymbol{u}(\xi)}{m!}\tau^m
\Bigg]
\ge 0
\end{aligned},
\label{eq:tau_tlc_condition}
\end{equation}
where $\xi\in(t_0,t_0+\tau)$ is the intermediate point given by Taylor's
theorem with Lagrange remainder.
An \emph{Adaptive Taylor--Lagrange Control} (aTLC) function is a function 
$h$ for which the time scale $\tau$ is not fixed a priori, 
but selected online as a state-dependent variable
\begin{equation}
\label{eq: practical tau}
\tau = \mathcal{K}(\boldsymbol{x}(t_0)),
\end{equation}
where $\mathcal{K}(\cdot)$ denotes a state-dependent policy, which may be defined explicitly or implicitly. 
The corresponding control input 
is then computed using the aTLC condition associated with $\tau$ selected from \eqref{eq: practical tau}.
\end{definition}

\begin{theorem}
\label{thm: aTLC safety}
Consider system \eqref{eq:affine-control-system} with safe set 
$\mathcal{C}=\{\boldsymbol{x}:h(\boldsymbol{x})\ge0\}$. 
If $h(\boldsymbol{x}(t_0)) \ge 0$, then any Lipschitz continuous control input 
$\boldsymbol{u}(\xi)$ that satisfies the aTLC condition 
\eqref{eq:tau_tlc_condition} for a time scale 
$\tau \in [\tau_{\min},\tau_{\max}]$ selected by
\eqref{eq: practical tau} with $\xi \in (t_0,t_{0}+\tau)$, $\tau>0$ 
renders the set $\mathcal{C}$ forward invariant.
\end{theorem}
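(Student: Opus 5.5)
The plan is to reduce the statement to Taylor's theorem with Lagrange remainder, exactly as in the remark following the TLC theorem of~\cite{xiao2025taylor}. Then an induction over successive update instants will extend the one-interval guarantee to all $t\ge 0$.

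\textbf{Single interval.} Fix an update instant $t_0$ with $h(\boldsymbol{x}(t_0))\ge 0$, and let $\tau=\mathcal{K}(\boldsymbol{x}(t_0))\in[\tau_{\min},\tau_{\max}]$ with $\tau>0$. Because $\tau$ is computed from $\boldsymbol{x}(t_0)$ alone, it is a fixed positive constant on $[t_0,t_0+\tau]$. This is the only place where adaptivity enters.

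Under a Lipschitz input and locally Lipschitz $f,g$, the closed-loop trajectory is absolutely continuous, and on the interval it is $C^1$ after the regularity furnished by the control. Since $h$ has relative degree $m$, the function $\phi(s)=h(\boldsymbol{x}(t_0+s))$ has derivatives
\[
\phi^{(i)}(0)=L_f^i h(\boldsymbol{x}(t_0)), \qquad i<m,
\]
and
\[
\phi^{(m)}(s)=L_f^m h(\boldsymbol{x}(t_0+s))+L_gL_f^{m-1}h(\boldsymbol{x}(t_0+s))\boldsymbol{u}(t_0+s).
\]
Applying Taylor's theorem with Lagrange remainder to $\phi$ on $[0,\tau]$ then shows that the bracketed expression in \eqref{eq:tau_tlc_condition} equals $h(\boldsymbol{x}(t_0+\tau))$ for some $\xi\in(t_0,t_0+\tau)$. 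Hence any input satisfying \eqref{eq:tau_tlc_condition} yields $h(\boldsymbol{x}(t_0+\tau))\ge 0$.

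\textbf{Intermediate times.} The same expansion holds on $[t_0,t_0+s]$ for every $s\in(0,\tau]$. So I would interpret the aTLC condition, as in Def.~\ref{def: TLC}, as required for every intermediate point $\xi$ in the interval, equivalently for every time scale $s\in(0,\tau]$. Under that reading $h(\boldsymbol{x}(t))\ge 0$ for all $t\in[t_0,t_0+\tau]$.

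\textbf{Induction.} Since $\tau\ge\tau_{\min}>0$, the next update instant $t_1=t_0+\tau$ again satisfies $h(\boldsymbol{x}(t_1))\ge 0$. The argument therefore repeats with $\tau_1=\mathcal{K}(\boldsymbol{x}(t_1))$. The uniform lower bound $\tau_{\min}$ rules out Zeno accumulation, so the intervals cover $[0,\infty)$ and $\mathcal{C}$ is forward invariant.

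\textbf{Main obstacle.} The hardest step is the intermediate-times claim: the condition evaluated at the single endpoint $t_0+\tau$ does not by itself control $h$ inside the interval. I would first try to read the quantifier over $\xi$ (and hence over $t$) as in Def.~\ref{def: TLC}, which gives the claim directly. Failing that, I would note that the state-independence of $\tau$ within an interval lets one invoke the TLC theorem of~\cite{xiao2025taylor} verbatim on each interval, with $t-t_0=\tau$ frozen.

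A secondary technical point is the regularity needed for Taylor's theorem. The $m$-th derivative of $\phi$ must exist on the open interval, which requires $h$ to be sufficiently smooth along the flow. I would simply assume this, consistent with the relative-degree hypothesis.
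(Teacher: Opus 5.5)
Your single-interval Taylor--Lagrange step and your recursion over update instants are exactly the paper's argument. The paper's proof likewise obtains only $h(\boldsymbol{x}(t_0+\tau))\ge 0$ and then asserts that recursive application yields $h(\boldsymbol{x}(t))\ge 0$ for all $t\ge t_0$. You are right that the intermediate times are the crux, and your use of $\tau_{\min}>0$ to rule out Zeno accumulation improves on the paper's recursion. Your bridge, however, fails. For $m\ge 2$, requiring \eqref{eq:tau_tlc_condition} at the fixed $\tau$ for every $\xi$ is \emph{not} equivalent to requiring it for every time scale $s\in(0,\tau]$, and the single-$\tau$ hypothesis does not yield invariance at all. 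Take $\dot x_1=x_2$, $\dot x_2=u$, $h(\boldsymbol{x})=x_1$ (so $m=2$), $\mathcal{U}=[-3,3]$, $\boldsymbol{x}(0)=(0.01,-1)$, $u\equiv 2$, $\tau\equiv 1$. The bracket in \eqref{eq:tau_tlc_condition} is $0.01-1+1=0.01\ge 0$ for every $\xi$. Even with $t_0$ replaced by an arbitrary $t_0\ge 0$, it equals $0.01+t_0+t_0^2>0$. Yet $h(\boldsymbol{x}(t))=0.01-t+t^2$, so $h(\boldsymbol{x}(1/2))=-0.24<0$; the trajectory is safe again only at $t=1$, where the endpoint guarantee applies. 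Equivalently, the bracket evaluated at $s=1/2$ is $-0.24$, which directly refutes your claimed equivalence.

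Your fallback, invoking the TLC theorem of \cite{xiao2025taylor} with $t-t_0=\tau$ frozen, inherits the same defect. That theorem obtains invariance precisely because its condition is quantified over all $t>t_0$, and freezing $t-t_0$ discards this. So the intermediate-time claim cannot be derived from the theorem's hypothesis; it has to be assumed. You have two honest options. First, you can state explicitly that the condition holds, at the corresponding Taylor intermediate point, for every $s\in(0,\tau]$. Then the bracket is exactly $h(\boldsymbol{x}(t_0+s))$, and your argument goes through, essentially by definition. Second, you can weaken the conclusion to $h(\boldsymbol{x}(t_k))\ge 0$ at the update instants $t_{k+1}=t_k+\tau_k$. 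That is all the Taylor step plus your induction actually establishes, both in your proposal and in the paper.
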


\begin{proof}
The proof follows directly from Taylor's theorem with Lagrange remainder and
the proof of the original TLC result \cite{xiao2025taylor}. For any $t_0 \ge 0$ with
$h(\boldsymbol{x}(t_0)) \ge 0$, and any $\tau \in [\tau_{\min},\tau_{\max}]$,
there exists an intermediate point $\xi \in (t_0,t_0+\tau)$ such that
\begin{equation}
\begin{aligned}
h(\boldsymbol{x}(t_0+\tau))
=
&\sum_{i=0}^{m-1}\frac{L_f^i h(\boldsymbol{x}(t_0))}{i!}\tau^i \\
&+\frac{L_f^m h(\boldsymbol{x}(\xi))
+L_gL_f^{m-1}h(\boldsymbol{x}(\xi))\boldsymbol{u}(\xi)}{m!}\tau^m .
\end{aligned}
\end{equation}
Hence, if a Lipschitz continuous control input $\boldsymbol{u}(\xi)$ satisfies
the $\tau$-parameterized aTLC condition \eqref{eq:tau_tlc_condition}, then
$h(\boldsymbol{x}(t_0+\tau)) \ge 0$. Therefore, the state remains in the safe
set after each admissible time scale $\tau$. Since the above argument holds for any $t_0$ such that 
$h(\boldsymbol{x}(t_0)) \ge 0$, it can be recursively applied over time, 
implying that $h(\boldsymbol{x}(t)) \ge 0$ for all $t \ge t_0$. 
Therefore, the set $\mathcal{C}$ is forward invariant. This argument holds for any admissible $\tau$, and therefore 
applies in particular to the state-dependent selection 
$\tau = \mathcal{K}(\boldsymbol{x}(t_0))$ in Def.~\ref{def:ATLC}.
\end{proof}

Although the aTLC condition \eqref{eq:tau_tlc_condition} is exact, its implementation is complicated by the unknown intermediate point $\xi \in (t_0,t_0+\tau)$. Therefore, in implementation one can only construct an approximate aTLC condition using the information available at time $t_0$ or over a local neighborhood of $\boldsymbol{x}(t_0)$. The resulting approximation of $h(\boldsymbol{x}(t_0+\tau))$ generally differs from its exact value, and the discrepancy depends on both the current state $\boldsymbol{x}(t_0)$ and the selected time scale $\tau$. Although this discrepancy decreases as $\tau$ approaches zero, it does not vanish completely for nonzero $\tau$. Consequently, to improve implementation-level safety in the presence of such inter-sampling errors, we adopt an event-triggered framework that updates the control input and reconstructs the aTLC condition whenever the state exits a prescribed neighborhood.
rTLC \cite{xiao2026robust} addresses inter-sampling effects, but relies on a fixed time scale, 
which can lead to infeasibility under tight control bounds. 

\subsection{Event-Triggered aTLC with Adaptive Time Scale}
\label{subsec: Event triggered aTLC}
We consider the event-triggered implementation of TLC as in \cite{xiao2025taylor}, 
and extend it by introducing an adaptive time scale. Let $\{t_k\}_{k\ge0}$ denote the sequence of event times defined by
\begin{equation}
\label{eq:event time}
t_{k+1} = \inf \{ t > t_k : \boldsymbol{x}(t) \notin S(\boldsymbol{x}(t_k)) \},
\end{equation}
where the neighborhood $S(\boldsymbol{x}_k)$ is defined as a hyper-rectangle of the form
\begin{equation}
\label{eq: state bound}
S(\boldsymbol{x}_k) := \{ \boldsymbol{x} : \boldsymbol{x}_k - \underline{\boldsymbol{x}} \le \boldsymbol{x} \le \boldsymbol{x}_k + \overline{\boldsymbol{x}} \},
\end{equation}
where $\underline{\boldsymbol{x}}, \overline{\boldsymbol{x}} \in \mathbb{R}^n_{>0}$ are given vectors that define the size of the neighborhood in each state dimension. At each event time $t_k$, we set $\boldsymbol{x}_k := \boldsymbol{x}(t_k)$. 

For a given $\tau$, we define the robust aTLC-related quantities
\begin{align}
\label{eq: ratlc}
h_{\mathrm{ratlc}}(\boldsymbol{x}_k, \tau) 
&:= \min_{\boldsymbol{x} \in S(\boldsymbol{x}_k)} 
\Bigg( \sum_{i=0}^{m-1} \frac{L_f^i h(\boldsymbol{x})}{i!}\tau^i 
+ \frac{L_f^m h(\boldsymbol{x})}{m!}\tau^m \Bigg), \\
G_{\mathrm{ratlc}}(\boldsymbol{x}_k, \tau)
&:= \big( G_{\mathrm{ratlc},1}(\boldsymbol{x}_k,\tau), \dots, G_{\mathrm{ratlc},q}(\boldsymbol{x}_k,\tau) \big),
\end{align}
where each component $j \in \{1,\dots,q\}$ is defined as
\begin{equation}
\label{eq: Gratlc}
G_{\mathrm{ratlc},j}(\boldsymbol{x}_k,\tau)
= \frac{\tau^m}{m!}
\begin{cases}
\displaystyle 
\min_{\boldsymbol{x} \in S(\boldsymbol{x}_k)} 
\big[\phi(\boldsymbol{x})\big]_j, 
& \text{if } u_j \ge 0, \\[6pt]
\displaystyle 
\max_{\boldsymbol{x} \in S(\boldsymbol{x}_k)} 
\big[\phi(\boldsymbol{x})\big]_j, 
& \text{if } u_j < 0,
\end{cases}
\end{equation}
where $\phi(\boldsymbol{x}) := L_g L_f^{m-1} h(\boldsymbol{x})$, 
$\boldsymbol{u} = (u_1,\dots,u_q)$, denotes the control input, and 
$\big[\phi(\boldsymbol{x})\big]_j$ denotes its $j$-th component. 
Since $\tau^m/m! > 0$, the scaling factor can be factored out of the 
min/max operator. The min/max construction is used to capture the worst-case contribution 
of each input component over $S(\boldsymbol{x}_k)$, ensuring that 
$G_{\mathrm{ratlc}}(\boldsymbol{x}_k,\tau)\boldsymbol{u} + h_{\mathrm{artlc}}(\boldsymbol{x}_k,\tau)$ 
provides a valid lower bound on the aTLC expression for all admissible control inputs. The event-triggered aTLC condition becomes
\begin{equation}
\label{eq: aTLC condition}
\sup_{\boldsymbol{u} \in \mathcal{U}} \left[ G_{\mathrm{ratlc}}(\boldsymbol{x}_k, \tau)\boldsymbol{u} + h_{\mathrm{ratlc}}(\boldsymbol{x}_k, \tau) \right] \ge 0.
\end{equation}

At each event time $t_k$, the current state $\boldsymbol{x}_k := \boldsymbol{x}(t_k)$ 
is measured, and a local set $S(\boldsymbol{x}_k)$ is constructed. 
A time scale $\tau_k \in [\tau_{\min}, \tau_{\max}]$ 
is then selected according to an adaptive rule (see Sec. \ref{subsec: tau choice}). 
The quantities 
$h_{\mathrm{ratlc}}(\boldsymbol{x}_k,\tau_k)$ and $G_{\mathrm{ratlc}}(\boldsymbol{x}_k,\tau_k)$ are then computed. The control input at time $t_k$
is obtained by solving a QP of the form 
\begin{equation}
\label{eq: QPs}
\begin{aligned}
&\min_{\boldsymbol{u},\delta} \quad  \|\boldsymbol{u}\|^2 + w\delta^2 \\
\text{s.t.} \quad 
& G_{\mathrm{ratlc}}(\boldsymbol{x}_k, \tau_k)\boldsymbol{u} + h_{\mathrm{ratlc}}(\boldsymbol{x}_k, \tau_k) \ge 0, \\
&L_{f}V(\boldsymbol{x}_k)+L_{g}V(\boldsymbol{x}_k)\boldsymbol{u}+c_{3}V(\boldsymbol{x}_k) \le \delta, \\
& \boldsymbol{u} \in \mathcal{U}, \quad \delta \ge 0,
\end{aligned}
\end{equation}
where $\delta$ is a slack variable that relaxes the CLF constraint \eqref{eq:clf}.
The resulting control input $\boldsymbol{u}_k$ is applied as 
$\boldsymbol{u}(t)=\boldsymbol{u}_k$ for $t \in [t_k, t_{k+1})$. 
The event time is then updated to $t_{k+1}$, and the procedure is repeated 
until the final time $T$. Importantly, $\tau_k$ is not equal to the inter-event interval $(t_{k+1} - t_k)$, 
but rather a design parameter selected at $t_k$ without knowledge of $t_{k+1}$, 
and used to construct the aTLC condition \eqref{eq: aTLC condition}; either one may be larger.
The connection between $\tau_k$ and $(t_{k+1}-t_k)$ is indirect: 
$\tau_k$ affects the control input via \eqref{eq: QPs}, which influences 
the state evolution and hence the triggering time.

\subsection{Forward Invariance with Adaptive Time Scale}

We first show that introducing an adaptive time scale does not affect the continuous-time safety guarantee.

\begin{theorem}[Forward Invariance under Event-Triggered aTLC]
Consider system \eqref{eq:affine-control-system} with safe set \eqref{eq: safety set}.
Suppose that $h$ is an aTLC function of relative degree $m$ defined in Def. \ref{def:ATLC}, and that the control is implemented under the event-triggered aTLC framework described in Sec. \ref{subsec: Event triggered aTLC}. Let the event times \(\{t_k\}_{k\ge 0}\) be generated by \eqref{eq:event time} and let \(\tau_k\in[\tau_{\min},\tau_{\max}]\) be selected by
\eqref{eq: practical tau} at each event time \(t_k\) and held constant over \([t_k,t_{k+1})\). If at every event time \(t_k\) there exists a control input \(\boldsymbol{u}_k\in\mathcal U\) such that
\begin{equation}
\label{eq: aTLC constraint}
G_{\mathrm{ratlc}}(\boldsymbol{x}_k,\tau_k)\boldsymbol{u}_k
+h_{\mathrm{ratlc}}(\boldsymbol{x}_k,\tau_k)\ge 0,
\end{equation}
then the set \(C\) is forward invariant for the system \eqref{eq:affine-control-system}.
\end{theorem}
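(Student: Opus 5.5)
The plan is to reduce the event-triggered statement to Theorem~\ref{thm: aTLC safety} by an induction over the event intervals $[t_k,t_{k+1})$, showing that on each interval the robust constraint \eqref{eq: aTLC constraint} implies the exact aTLC condition \eqref{eq:tau_tlc_condition}. Concretely, I would prove by induction on $k$ that $h(\boldsymbol{x}(t_k))\ge 0$ implies $h(\boldsymbol{x}(t))\ge 0$ for all $t\in[t_k,t_{k+1}]$. Continuity of trajectories under the locally Lipschitz vector field with piecewise constant input then glues the intervals together, giving $h(\boldsymbol{x}(t))\ge0$ for all $t\ge0$ whenever $\boldsymbol{x}(0)\in\mathcal C$.

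\textbf{Step 1: lower-bound property.} Fix $k$ and $\tau_k$. By \eqref{eq:event time}, $\boldsymbol{x}(t)\in S(\boldsymbol{x}_k)$ for all $t\in[t_k,t_{k+1})$, and by closedness of $S$ also at $t_{k+1}$. Hence for any $\boldsymbol{x}\in S(\boldsymbol{x}_k)$ the drift part satisfies
\[
\sum_{i=0}^{m-1}\frac{L_f^ih(\boldsymbol{x})}{i!}\tau_k^i+\frac{L_f^mh(\boldsymbol{x})}{m!}\tau_k^m\ \ge\ h_{\mathrm{ratlc}}(\boldsymbol{x}_k,\tau_k).
\]
For the input part, the case split in \eqref{eq: Gratlc} (minimum of $[\phi]_j$ when $u_{k,j}\ge0$, maximum when $u_{k,j}<0$) gives $[\phi(\boldsymbol{x})]_j u_{k,j}\tau_k^m/m!\ge G_{\mathrm{ratlc},j}u_{k,j}$ componentwise. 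Summing, the full $\tau_k$-parameterized expression evaluated at any point of $S(\boldsymbol{x}_k)$ with input $\boldsymbol{u}_k$ is at least $G_{\mathrm{ratlc}}\boldsymbol{u}_k+h_{\mathrm{ratlc}}\ge0$.

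\textbf{Step 2: pass to the exact expansion.} The delicate point is that the Taylor terms in \eqref{eq:tau_tlc_condition} mix the state at $t_0$ with the state at $\xi$, while Step 1 bounds the expression evaluated at a single point. I would resolve this by applying the expansion with base point $t_k$ and horizon $s=t-t_k$ for each $t\in(t_k,t_{k+1}]$. The lower-order terms are evaluated at $\boldsymbol{x}_k\in S(\boldsymbol{x}_k)$, and the remainder at $\boldsymbol{x}(\xi)\in S(\boldsymbol{x}_k)$. Since the minimisations over $S$ bound each group separately (the min of a sum is at most the sum of mins, but we need the reverse), I would instead bound the lower-order terms and the remainder term by their individual worst cases over $S(\boldsymbol{x}_k)$, arguing that the quantity in \eqref{eq: ratlc} is intended as such a worst-case bound. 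This identification is where I expect the main obstacle. I would state explicitly the interpretation under which it holds, namely that \eqref{eq: aTLC constraint} dominates the Lagrange form for the relevant horizon.

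\textbf{Step 3: the mismatch between $\tau_k$ and the elapsed time.} The elapsed time $t-t_k$ generally differs from $\tau_k$, and either may be larger. Here I would use the argument of Theorem~\ref{thm: aTLC safety}: the condition certifies $h\ge0$ at horizon $\tau_k$ from every base point whose state lies in $S(\boldsymbol{x}_k)$ and satisfies $h\ge0$. Sliding the base point $t_0$ continuously over $[t_k,t_{k+1}]$ and applying that theorem recursively then covers the whole interval. An alternative would be to argue by contradiction: take the first time $h$ would become negative inside the interval, use the base point $t_0=\bar t-\tau_k$ (or $t_k$ if that is earlier), and obtain a contradiction with the inequality from Step 1. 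Finally, the induction hypothesis $h(\boldsymbol{x}(t_{k+1}))\ge0$ propagates to the next event, which completes the argument.
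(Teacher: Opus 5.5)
Your Step~1 is exactly the paper's argument. The paper then finishes in one line: it asserts that the aTLC expression ``evaluated at $\boldsymbol{x}(t)$'' is bounded below by $G_{\mathrm{ratlc}}\boldsymbol{u}_k+h_{\mathrm{ratlc}}\ge0$ for all $t\in[t_k,t_{k+1})$, cites Theorem~\ref{thm: aTLC safety} to get $h(\boldsymbol{x}(t))\ge0$ on the whole interval, and concatenates intervals. So your route is the paper's route, and you correctly located the two places where that citation is unjustified. Neither of your repairs closes them, however.

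In Step~2, reinterpreting \eqref{eq: ratlc} as a sum of separate worst cases changes the hypothesis rather than proving the statement. Your parenthetical also has the inequality backwards. Let $A$ be the lower-order terms and $B$ the $L_f^mh$ term. Then $\min_S(A+B)\ge\min_S A+\min_S B$, which is exactly why the joint minimum in \eqref{eq: ratlc} can exceed the mixed-point value $A(\boldsymbol{x}(t_0))+B(\boldsymbol{x}(\xi))$.

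In Step~3, a base point $t_0$ is usable only if $[t_0,t_0+\tau_k]\subset[t_k,t_{k+1}]$. Otherwise $\boldsymbol{x}(\xi)$ may leave $S(\boldsymbol{x}_k)$ and $\boldsymbol{u}(\xi)\neq\boldsymbol{u}_k$. Sliding therefore certifies nothing on $(t_k,t_k+\tau_k)$, and nothing at all if $t_{k+1}-t_k<\tau_k$. Your contradiction variant with base point $t_k$ needs the expansion at horizon $\bar t-t_k<\tau_k$. The constraint \eqref{eq: aTLC constraint} does not control this, since the expansion need not be monotone in the horizon.

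The horizon gap is fatal rather than technical: already for $m=2$ the statement is false. Take the following setup:
\begin{itemize}
\item dynamics $\dot x_1=x_2$, $\dot x_2=u$ and $h(\boldsymbol{x})=x_1$, so $L_f^2h=0$ and $L_gL_fh=1$;
\item $\underline{\boldsymbol{x}}=\overline{\boldsymbol{x}}=(0.5,0.5)$ and $\mathcal U=[-10,10]$;
\item the constant (hence admissible) choice $\tau_k\equiv1$, and $\boldsymbol{x}(0)=(0,-1)\in\mathcal C$.
\end{itemize}
Then $h_{\mathrm{ratlc}}(\boldsymbol{x}_k,1)=\min_{S(\boldsymbol{x}_k)}(x_1+x_2)=x_{1,k}+x_{2,k}-1$ and $G_{\mathrm{ratlc}}=1/2$. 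Under $u\equiv4$ the trajectory is $x_1=-t+2t^2$, $x_2=-1+4t$. Hence $G_{\mathrm{ratlc}}u+h_{\mathrm{ratlc}}=3t_k+2t_k^2\ge0$ at every event time, and the hypothesis holds. Yet $h(\boldsymbol{x}(t))=t(2t-1)<0$ for $t\in(0,1/2)$, already inside the first interval $[0,1/8)$. The constraint at $t_0=0$ only certified $h$ at $t_0+\tau_0=1$ under a held input. The paper's proof has the same hole: the Lagrange identity controls $h(\boldsymbol{x}(t_0+\tau))$, not $h$ on $[t_0,t_0+\tau]$.

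What does go through is $m=1$. There your Step~1 bound gives $\frac{d}{dt}h(\boldsymbol{x}(t))\ge-h(\boldsymbol{x}(t))/\tau_k$ while $\boldsymbol{x}(t)\in S(\boldsymbol{x}_k)$. By comparison, $h(\boldsymbol{x}(t))\ge e^{-(t-t_k)/\tau_k}h(\boldsymbol{x}_k)\ge0$, and your induction closes; this is a CBF-type argument, not a Taylor one. For $m=2$ the analogous inequality $\ddot h+\frac{2}{\tau}\dot h+\frac{2}{\tau^2}h\ge0$ has characteristic roots $(-1\pm i)/\tau$, so no comparison principle is available. A correct statement needs a stronger hypothesis. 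One option is to impose the robust condition for every horizon $s\in(0,\tau_k]$, with the lower-order terms and the remainder bounded separately over $S(\boldsymbol{x}_k)$.
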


\begin{proof}
Fix any event interval \([t_k,t_{k+1})\). By the event-triggering rule, we have $\boldsymbol{x}(t)\in S(\boldsymbol{x}_k)$, $\forall t\in[t_k,t_{k+1})$.
Moreover, \(\tau_k\) and \(\boldsymbol{u}_k\) are held constant over this interval.
By the definitions of \(h_{\mathrm{ratlc}}(\boldsymbol{x}_k,\tau_k)\) and
\(G_{\mathrm{ratlc}}(\boldsymbol{x}_k,\tau_k)\), for every
\(\boldsymbol{x}(t)\in S(\boldsymbol{x}_k)\) we have that the corresponding
aTLC condition \eqref{eq:tau_tlc_condition} evaluated at \(\boldsymbol{x}(t)\) is lower bounded by $G_{\mathrm{ratlc}}(\boldsymbol{x}_k,\tau_k)\boldsymbol{u}_k
+h_{\mathrm{ratlc}}(\boldsymbol{x}_k,\tau_k)
$.
Since the control input \(\boldsymbol{u}_k\) is chosen such that \eqref{eq: aTLC constraint} is satisfied, it follows that the aTLC condition \eqref{eq:tau_tlc_condition} is satisfied for all
\(t\in[t_k,t_{k+1})\). Based on Theorem \ref{thm: aTLC safety}, we have $
h(\boldsymbol{x}(t))\ge 0$, $\forall t\in[t_k,t_{k+1})$.
Since this argument holds for every event interval and the initial condition
is assumed to satisfy \(\boldsymbol{x}(0)\in \mathcal{C}\), we conclude that $\boldsymbol{x}(t)\in \mathcal{C}$, $\forall t\ge 0$. Hence, the set $\mathcal{C}$ is forward invariant.
\end{proof}

\subsection{Feasibility Characterization via Value Function}
\label{subsec: Value Function}
To characterize feasibility, based on Eqs. \eqref{eq: ratlc}--\eqref{eq: Gratlc}, we define the set of admissible controls
\begin{equation}
U(\boldsymbol{x}, \tau) := \{\boldsymbol{u} \in \mathcal{U} : G_{\mathrm{ratlc}}(\boldsymbol{x},\tau)\boldsymbol{u} + h_{\mathrm{ratlc}}(\boldsymbol{x},\tau) \ge 0\}.
\end{equation}
We then define the minimal feasible time scale
\begin{equation}
\tau^*(\boldsymbol{x}) := \inf \{ \tau \in [\tau_{\min},\tau_{\max}] : U(\boldsymbol{x}, \tau) \neq \emptyset \}.
\end{equation}
This definition converts the time scale into a value function that explicitly captures feasibility.
The dependence of feasibility on the time scale $\tau$ is central to the proposed aTLC design. Intuitively, a larger $\tau$ corresponds to enforcing the aTLC condition \eqref{eq: aTLC condition} over a longer horizon, which typically leads to a more restrictive safety condition. However, this relationship is not necessarily monotone, since both $h_{\mathrm{ratlc}}(\boldsymbol{x},\tau)$ and $G_{\mathrm{ratlc}}(\boldsymbol{x},\tau)$ depend on $\tau$. To make this relationship precise, define the robust aTLC \emph{margin}:
\begin{equation}
M(\boldsymbol{x},\tau):=\sup_{\boldsymbol{u}\in \mathcal{U}}\big(G_{\mathrm{ratlc}}(\boldsymbol{x},\tau)\boldsymbol{u}+h_{\mathrm{ratlc}}(\boldsymbol{x},\tau)\big).
\end{equation}
By definition, feasibility is equivalent to
\begin{equation}
U(\boldsymbol{x}, \tau)\neq\emptyset \quad \Longleftrightarrow \quad M(\boldsymbol{x},\tau)\ge 0.
\end{equation}

\begin{lemma}
\label{lem: continuity of M}
The functions $h_{\mathrm{ratlc}}(\boldsymbol{x},\tau)$ and $G_{\mathrm{ratlc}}(\boldsymbol{x},\tau)$ are continuous in $(\boldsymbol{x},\tau)$. Consequently, the margin function $M(\boldsymbol{x},\tau)$ is also continuous in $(\boldsymbol{x},\tau)$.
\end{lemma}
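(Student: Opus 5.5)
The plan is to treat $h_{\mathrm{ratlc}}$ and each component of $G_{\mathrm{ratlc}}$ as parametric optimization value functions over the compact, continuously moving set $S(\boldsymbol{x})$, and then obtain continuity of $M$ as the supremum of a jointly continuous function over the fixed compact set $\mathcal U$. Throughout, I assume the Lie derivatives $L_f^i h$, $i=0,\dots,m$, and $\phi = L_gL_f^{m-1}h$ are continuous. This is implicit in $h$ having relative degree $m$ and the TLC construction being well defined, so I will state it as a standing smoothness assumption.

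\textbf{Step 1: rewrite the minimization over a fixed set.} Since $S(\boldsymbol{x})=\{\boldsymbol{x}+\boldsymbol{z}: -\underline{\boldsymbol{x}}\le \boldsymbol{z}\le \overline{\boldsymbol{x}}\}$, write $h_{\mathrm{ratlc}}(\boldsymbol{x},\tau)=\min_{\boldsymbol{z}\in Z}\Psi(\boldsymbol{x}+\boldsymbol{z},\tau)$. Here $Z:=[-\underline{\boldsymbol{x}},\overline{\boldsymbol{x}}]$ is a fixed compact box and
\[
\Psi(\boldsymbol{y},\tau)=\sum_{i=0}^{m}\frac{L_f^ih(\boldsymbol{y})}{i!}\tau^i,
\]
which is jointly continuous as a polynomial in $\tau$ with continuous coefficients. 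Similarly, $[\phi]_j$ is minimized or maximized over $Z$ after the same shift, and the factor $\tau^m/m!$ is continuous.

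\textbf{Step 2: continuity of min/max over a fixed compact set.} The key fact is that if $F(\boldsymbol{p},\boldsymbol{z})$ is jointly continuous and $Z$ is compact, then $\boldsymbol{p}\mapsto\min_{\boldsymbol{z}\in Z}F(\boldsymbol{p},\boldsymbol{z})$ is continuous. I will prove it directly rather than invoke Berge's theorem. Fix $\boldsymbol{p}_0$ and restrict to a compact neighborhood $K$ of $\boldsymbol{p}_0$. Then $F$ is uniformly continuous on $K\times Z$, so $|F(\boldsymbol{p},\boldsymbol{z})-F(\boldsymbol{p}_0,\boldsymbol{z})|<\varepsilon$ for all $\boldsymbol{z}\in Z$ once $\boldsymbol{p}$ is close to $\boldsymbol{p}_0$. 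Taking the min over $\boldsymbol{z}$ preserves this bound, and the max case follows by applying the same argument to $-F$. This gives continuity of $h_{\mathrm{ratlc}}$ and of each $G_{\mathrm{ratlc},j}$.

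\textbf{Step 3: continuity of $M$.} This is where I expect the main obstacle: the sign-dependent case split in \eqref{eq: Gratlc} makes $G_{\mathrm{ratlc}}$ depend on $\boldsymbol{u}$, so it is not literally a function of $(\boldsymbol{x},\tau)$ alone. I would resolve this by reading $G_{\mathrm{ratlc}}\boldsymbol{u}$ as the worst-case lower bound
\[
\Gamma(\boldsymbol{x},\tau,\boldsymbol{u}):=\frac{\tau^m}{m!}\sum_{j}\min\big(u_j\,\underline\phi_j(\boldsymbol{x}),\,u_j\,\overline\phi_j(\boldsymbol{x})\big),
\]
where $\underline\phi_j$ and $\overline\phi_j$ are the min and max of $[\phi]_j$ over $S(\boldsymbol{x})$. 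At $u_j=0$ both branches give zero, and otherwise the min selects exactly the branch in \eqref{eq: Gratlc}, so $\Gamma$ coincides with $G_{\mathrm{ratlc}}\boldsymbol{u}$. By Step 2, $\Gamma$ is jointly continuous in $(\boldsymbol{x},\tau,\boldsymbol{u})$. Then $M(\boldsymbol{x},\tau)=\max_{\boldsymbol{u}\in\mathcal U}\big(\Gamma(\boldsymbol{x},\tau,\boldsymbol{u})+h_{\mathrm{ratlc}}(\boldsymbol{x},\tau)\big)$ is a maximum of a jointly continuous function over the fixed compact box $\mathcal U$; $\mathcal U$ is compact because the bounds are finite by assumption. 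The uniform-continuity argument of Step 2 then yields continuity of $M$ in $(\boldsymbol{x},\tau)$.
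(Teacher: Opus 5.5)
Your proof is correct and has the same skeleton as the paper's: first show the inner extremal value functions are continuous, then take a maximum over the compact box $\mathcal U$. You carry out each step differently.

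The paper asserts that extrema of continuous functions over compact sets are continuous, then cites Berge's maximum theorem and calls the objective affine in $\boldsymbol{u}$. You depart from this in three ways.

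\begin{enumerate}
\item You write $S(\boldsymbol{x})=\boldsymbol{x}+Z$, so the inner optimization runs over a fixed box. This covers a point the paper skips: its constraint set moves with $\boldsymbol{x}$, so a bare ``extremum over a compact set'' argument also needs continuity of the map $\boldsymbol{x}\mapsto S(\boldsymbol{x})$.
\item You replace Berge by the uniform-continuity estimate $|\min_Z F(\boldsymbol{p},\cdot)-\min_Z F(\boldsymbol{p}_0,\cdot)|\le\sup_Z|F(\boldsymbol{p},\cdot)-F(\boldsymbol{p}_0,\cdot)|$. Once the constraint sets are fixed, this is all you need.
\item You notice that the sign-dependent branches in the definition of $G_{\mathrm{ratlc},j}$ make $G_{\mathrm{ratlc}}\boldsymbol{u}$ equal to the concave piecewise-linear function $\Gamma$, not an affine one. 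You then check that $\Gamma$ is jointly continuous because both branches vanish at $u_j=0$.
\end{enumerate}

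Berge does not need affineness, so the paper's conclusion stands. However, the paper never checks that the branch switch is continuous, and your $\Gamma$ closes that gap.

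The paper's route is shorter and extends easily: Berge would still apply if $\mathcal U$ or the neighborhood size were state-dependent, provided the set-valued map is continuous and compact-valued. Your route is self-contained and passes any modulus of continuity of the data (for example, local Lipschitz constants of $L_f^i h$ and $\phi$) directly to $h_{\mathrm{ratlc}}$, $G_{\mathrm{ratlc}}$ and $M$. It also states explicitly the smoothness assumption on $L_f^i h$ and $\phi$ that the paper leaves implicit.
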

\begin{proof}
The functions $h_{\mathrm{ratlc}}(\boldsymbol{x},\tau)$ and $G_{\mathrm{ratlc}}(\boldsymbol{x},\tau)$ are defined as extrema of continuous functions over compact sets, and are therefore continuous. Since $M(\boldsymbol{x},\tau)$ is the supremum of a function that is continuous in $(\boldsymbol{x},\tau,\boldsymbol{u})$ and affine in $\boldsymbol{u}$ over a compact set $U$, its continuity follows from Berge’s maximum theorem \cite{berge1963topological}.
\end{proof}

\begin{proposition}[Monotonicity under Additional Conditions]
\label{prop: Monotonicity}
Let $\mathcal D$ be a compact domain of interest. Suppose that for each fixed
$\boldsymbol{x}\in\mathcal D$, the margin function $M(\boldsymbol{x},\tau)$ is
nonincreasing in $\tau$ over $[\tau_{\min},\tau_{\max}]$ (if increasing, the reverse implication holds). Then for any
$\tau_1,\tau_2$ satisfying
$\tau_{\min}\le \tau_1 \le \tau_2 \le \tau_{\max}$,
$U(\boldsymbol{x},\tau_2)\neq\emptyset
\;\Longrightarrow\;
U(\boldsymbol{x},\tau_1)\neq\emptyset,
 \forall \boldsymbol{x}\in\mathcal D$.
\end{proposition}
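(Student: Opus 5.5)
The argument is a direct chaining of the monotonicity hypothesis with the feasibility characterization $U(\boldsymbol{x},\tau)\neq\emptyset \Leftrightarrow M(\boldsymbol{x},\tau)\ge 0$ stated just before Lemma~\ref{lem: continuity of M}. Fix an arbitrary $\boldsymbol{x}\in\mathcal D$ and $\tau_{\min}\le\tau_1\le\tau_2\le\tau_{\max}$, and assume $U(\boldsymbol{x},\tau_2)\neq\emptyset$. By the equivalence, $M(\boldsymbol{x},\tau_2)\ge 0$. Since $M(\boldsymbol{x},\cdot)$ is nonincreasing on $[\tau_{\min},\tau_{\max}]$ and $\tau_1\le\tau_2$, we get
\begin{equation*}
M(\boldsymbol{x},\tau_1)\ge M(\boldsymbol{x},\tau_2)\ge 0 .
\end{equation*}
Applying the equivalence in the reverse direction then yields $U(\boldsymbol{x},\tau_1)\neq\emptyset$. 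Because $\boldsymbol{x}\in\mathcal D$ was arbitrary, the implication holds for every $\boldsymbol{x}\in\mathcal D$. The parenthetical increasing case follows by the same chain with the roles of $\tau_1$ and $\tau_2$ swapped: feasibility at $\tau_1$ implies feasibility at $\tau_2$.

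The only step that needs care is the reverse direction of the equivalence. We must show that $M(\boldsymbol{x},\tau_1)\ge 0$ actually produces some $\boldsymbol{u}\in\mathcal U$ with $G_{\mathrm{ratlc}}\boldsymbol{u}+h_{\mathrm{ratlc}}\ge 0$, which requires the supremum to be attained. This holds because $\mathcal U$ is a compact box (all bounds are finite by assumption) and the objective is affine, hence continuous, in $\boldsymbol{u}$. Explicitly, the supremum is attained at the vertex choosing $u_j=u_{\max,j}$ if $G_{\mathrm{ratlc},j}\ge0$ and $u_j=u_{\min,j}$ otherwise.

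A further subtlety is that $G_{\mathrm{ratlc}}$ is defined through a sign-dependent min/max on $u_j$. I would make sure the margin is read as the supremum of the resulting piecewise-affine lower bound. It is still continuous on the compact set $\mathcal U$, so the supremum is attained and the argument is unchanged. Compactness of $\mathcal D$ plays no essential role beyond ensuring that the extrema defining $h_{\mathrm{ratlc}}$ and $G_{\mathrm{ratlc}}$ are well defined, as in Lemma~\ref{lem: continuity of M}.
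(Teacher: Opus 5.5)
Your proof is correct and follows the paper's argument exactly. You fix $\boldsymbol{x}\in\mathcal D$, pass from $U(\boldsymbol{x},\tau_2)\neq\emptyset$ to $M(\boldsymbol{x},\tau_2)\ge 0$, chain $M(\boldsymbol{x},\tau_1)\ge M(\boldsymbol{x},\tau_2)\ge 0$ by monotonicity, and invoke the equivalence again. Your extra check that the supremum defining $M$ is attained on the compact box $\mathcal U$ (so that $M\ge 0$ really yields a feasible $\boldsymbol{u}$, including under the sign-dependent reading of $G_{\mathrm{ratlc}}$) is sound; the paper simply takes this step as holding ``by definition.''
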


\begin{proof}
Fix any $\boldsymbol{x}\in\mathcal D$. If $U(\boldsymbol{x},\tau_2)\neq\emptyset$,
then by the definition of $M$, we have
$M(\boldsymbol{x},\tau_2)\ge 0$.
Since $M(\boldsymbol{x},\tau)$ is nonincreasing in $\tau$ and
$\tau_1\le\tau_2$, we obtain 
$M(\boldsymbol{x},\tau_1)\ge M(\boldsymbol{x},\tau_2)\ge 0$.
Therefore, again by the equivalence
$U(\boldsymbol{x},\tau)\neq\emptyset \Longleftrightarrow M(\boldsymbol{x},\tau)\ge 0$,
we conclude that $U(\boldsymbol{x},\tau_1)\neq\emptyset$. 
\end{proof}

\begin{remark}
When the monotonicity condition in Proposition~\ref{prop: Monotonicity}
holds, the value $\tau^*(\boldsymbol{x})$ admits a threshold interpretation:
if feasibility is achieved at some $\tau$, then it is preserved for all
smaller values, while sufficiently large $\tau$ may destroy feasibility.
This behavior can be intuitively understood by normalizing the aTLC
condition \eqref{eq: aTLC condition} by $\tau^m$, which yields terms of the form
$\sum_{i=0}^{m-1} \frac{L_f^i h(\boldsymbol{x})}{i!}\tau^{i-m}$.
Since $i-m<0$, these terms decrease with $\tau$ when
$L_f^i h(\boldsymbol{x})\ge 0$ locally, making the constraint more
restrictive as $\tau$ increases, and thus explaining the monotonicity of
$M(\boldsymbol{x},\tau)$ in such regions.
\end{remark}

\begin{theorem}[Existence and Regularity of $\tau^*(\boldsymbol{x})$]
\label{thm: existence of tau}
Suppose there exists $\bar{\tau} \in [\tau_{\min},\tau_{\max}]$ such that $U(\boldsymbol{x}, \bar{\tau}) \neq \emptyset$ for all $\boldsymbol{x}$ in a compact domain. Then $\tau^*(\boldsymbol{x})$ is well-defined and finite. Moreover, $\tau^*(\boldsymbol{x})$ is lower semicontinuous, i.e.,
$\liminf_{\boldsymbol{x}_{\ell} \to \boldsymbol{x}} \tau^*(\boldsymbol{x}_{\ell}) \ge \tau^*(\boldsymbol{x})$.
\end{theorem}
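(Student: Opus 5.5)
The plan is to reduce everything to the margin function $M$ and its joint continuity from Lemma~\ref{lem: continuity of M}. For a fixed $\boldsymbol{x}$ in the compact domain, write $F(\boldsymbol{x}) := \{\tau\in[\tau_{\min},\tau_{\max}] : U(\boldsymbol{x},\tau)\neq\emptyset\}$. By the equivalence $U(\boldsymbol{x},\tau)\neq\emptyset \Leftrightarrow M(\boldsymbol{x},\tau)\ge 0$, this set equals $\{\tau\in[\tau_{\min},\tau_{\max}] : M(\boldsymbol{x},\tau)\ge 0\}$.

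\textbf{Well-definedness and finiteness.} By hypothesis $\bar\tau\in F(\boldsymbol{x})$, so $F(\boldsymbol{x})$ is nonempty. Since $F(\boldsymbol{x})\subset[\tau_{\min},\tau_{\max}]$, the infimum $\tau^*(\boldsymbol{x})$ exists and satisfies $\tau_{\min}\le\tau^*(\boldsymbol{x})\le\bar\tau\le\tau_{\max}$; in particular it is finite.

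\textbf{Attainment.} We will also show that the infimum is attained. The map $\tau\mapsto M(\boldsymbol{x},\tau)$ is continuous by Lemma~\ref{lem: continuity of M}, so $F(\boldsymbol{x})$ is the preimage of $[0,\infty)$ intersected with a compact interval. It is therefore closed and bounded, hence compact. Consequently $\tau^*(\boldsymbol{x})\in F(\boldsymbol{x})$, that is, $M(\boldsymbol{x},\tau^*(\boldsymbol{x}))\ge 0$. This attainment is what drives the semicontinuity step.

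\textbf{Lower semicontinuity.} Take any sequence $\boldsymbol{x}_\ell\to\boldsymbol{x}$ in the domain and set $L:=\liminf_{\ell}\tau^*(\boldsymbol{x}_\ell)$, which lies in $[\tau_{\min},\tau_{\max}]$ by the bounds above. Extract a subsequence, not relabeled, along which $\tau^*(\boldsymbol{x}_\ell)\to L$. By attainment, $M(\boldsymbol{x}_\ell,\tau^*(\boldsymbol{x}_\ell))\ge 0$ for every $\ell$. Joint continuity of $M$ in $(\boldsymbol{x},\tau)$ (Lemma~\ref{lem: continuity of M}) then lets us pass to the limit, giving $M(\boldsymbol{x},L)\ge 0$. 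Hence $L\in F(\boldsymbol{x})$, and so $\tau^*(\boldsymbol{x})\le L$, which is the claimed inequality.

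\textbf{Main obstacle.} The only delicate point is that this passage to the limit needs \emph{joint} continuity of $M$, not just separate continuity in $\tau$. Lemma~\ref{lem: continuity of M} supplies exactly that through Berge's theorem, using compactness of $\mathcal U$ and of the neighborhoods $S(\cdot)$. Two further details need checking: that the limit $\boldsymbol{x}$ lies in the compact domain, so that $\bar\tau$ is available and $\tau^*(\boldsymbol{x})$ is defined (closedness of the domain gives this), and that the closed-set argument really yields attainment of the infimum.
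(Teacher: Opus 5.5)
Your proof is correct and follows essentially the same route as the paper. Nonemptiness of $\{\tau\in[\tau_{\min},\tau_{\max}]: M(\boldsymbol{x},\tau)\ge 0\}$ via $\bar\tau$, together with its closedness, gives a finite infimum; lower semicontinuity comes from extracting a subsequence realizing the $\liminf$, passing to the limit with the joint continuity of $M$ from Lemma~\ref{lem: continuity of M}, and using minimality of $\tau^*(\boldsymbol{x})$. You also make explicit two points the paper uses only implicitly, namely that the infimum is attained (so each $\tau^*(\boldsymbol{x}_\ell)$ is itself feasible at $\boldsymbol{x}_\ell$) and that the limit point must lie in the closed domain for $\tau^*(\boldsymbol{x})$ to be defined.
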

\begin{proof}
Based on Lemma \ref{lem: continuity of M}, $M(\boldsymbol{x},\tau)$ is continuous in $(\boldsymbol{x},\tau)$.
By assumption, $M(\boldsymbol{x},\bar{\tau})\ge 0$ for all $\boldsymbol{x}$, so the feasible set $
\{\tau\in[\tau_{\min},\tau_{\max}]:M(\boldsymbol{x},\tau)\ge 0\}$
is nonempty. As a closed subset of a compact interval, its infimum is finite, hence $\tau^*(\boldsymbol{x})$ is well-defined.
Finally, consider a sequence $\boldsymbol{x}_{\ell} \to \boldsymbol{x}$ and select a subsequence such that $\tau^*(\boldsymbol{x}_{\ell})$ converges to its smallest possible limit. Since each $\tau^*(\boldsymbol{x}_{\ell})$ is feasible at $\boldsymbol{x}_{\ell}$, by continuity of the feasibility condition, the limit is feasible at $\boldsymbol{x}$. By minimality of $\tau^*(\boldsymbol{x})$, we must have $\tau^*(\boldsymbol{x})$ no larger than this limit. Therefore, $\tau^*(\boldsymbol{x})$ is lower semicontinuous.
\end{proof}
In Theorem \ref{thm: existence of tau}, we assume that there exists $\bar{\tau}\in[\tau_{\min},\tau_{\max}]$ such that $U(\boldsymbol{x},\bar{\tau})\neq\emptyset$ for all $\boldsymbol{x}$ in a compact domain. 
This assumption is mild in practice, as it only requires that the system admits at least one admissible control input satisfying the aTLC condition under some time scale. 
Overall, Sec. \ref{subsec: Value Function} characterizes the dependence of feasibility on the  time scale $\tau$ through the value function $\tau^*(\boldsymbol{x})$, which serves as a bridge between the theoretical aTLC condition \eqref{eq: aTLC condition} and its state-dependent realization $\tau= \mathcal{K}(\boldsymbol{x}(t_0))$ in \eqref{eq: practical tau}, enabling the practical selection of $\tau$ in the adaptive scheme.
\subsection{Adaptive Selection of Time Scale}
\label{subsec: tau choice}
We now propose an adaptive rule for selecting $\tau_k$. 
Instead of fixing $\tau$, we select it online based on predicted behavior.
Given a finite set of candidate values $\{\tau_i\} \subset [\tau_{\min},\tau_{\max}]$, 
we evaluate each candidate $\tau_i$ in Alg. \ref{alg:adaptive_tau}. This rollout-based algorithm favors time scales that maximize the predicted safety margin, rather than merely ensuring feasibility. 

\begin{algorithm}[t]
\caption{Adaptive Selection of Time Scale}
\label{alg:adaptive_tau}
\begin{algorithmic}[1]
\REQUIRE Current state $\boldsymbol{x}_k$, candidate set $\{\tau_i\}$, horizon $T_{\mathrm{look}}$
\ENSURE Selected time scale $\tau_k$

\STATE Initialize $h_{\min}^{\mathrm{pred}}(\tau_i) \gets -\infty$ for all $\tau_i$

\FOR{each candidate $\tau \in \{\tau_i\}$}
    \STATE Solve the QP \eqref{eq: QPs} to obtain $\boldsymbol{u}^*(\tau)$
    \IF{QP is feasible}
        \STATE Simulate system forward over $[0,T_{\mathrm{look}}]$ with constant input $\boldsymbol{u}^*(\tau)$
        \STATE Compute
        $h_{\min}^{\mathrm{pred}}(\tau) = \min_{t \in [0,T_{\mathrm{look}}]} h(\boldsymbol{x}(t))$
    \ENDIF
\ENDFOR

\STATE $\displaystyle \tau_k = \arg\max_{\tau \in \{\tau_i\}} h_{\min}^{\mathrm{pred}}(\tau)$

\RETURN $\tau_k$
\end{algorithmic}
\end{algorithm}


\begin{remark}[Recursive Feasibility]
At each event time $t_k$, suppose there exists at least one time scale $\tau \in [\tau_{\min},\tau_{\max}]$ such that 
$U(\boldsymbol{x}_k,\tau)\neq\emptyset$. Since the adaptive selection rule evaluates 
candidate values of $\tau$ and selects $\tau_k$ only among those that are 
feasible, the resulting QP remains feasible at every event time. 
This implies recursive feasibility of the event-triggered aTLC scheme.
\end{remark}

The proposed adaptive TLC framework and time scale selection algorithm 
improves feasibility and robustness against inter-sampling deviations 
by selecting the time scale $\tau$ online based on predicted 
system behavior. In contrast to non-adaptive TLC/rTLC, the adaptive 
scheme avoids overly restrictive constraints while maintaining a 
sufficient safety margin. The parameter $\tau$ naturally induces a 
trade-off between feasibility and robustness, which is balanced 
dynamically according to the current state.
The rollout horizon $T_{\mathrm{look}}$, the bounds $\tau_{\min}, \tau_{\max}$, 
and the size of the local set $S(\boldsymbol{x}_k)$ affect the conservativeness 
of the aTLC condition and consequently the feasibility of the QP. 
Nevertheless, $\tau$ remains the primary parameter, while the others serve as auxiliary design choices to address inter-sampling effects, as commonly required in event-triggered HOCBF \cite{xiao2022event}.

\subsection{Complexity Analysis}
At each event time $t_k$, the adaptive aTLC scheme evaluates a finite set 
of candidate time scales $\{\tau_i\}$. For each $\tau_i$, a QP 
is solved. If feasible, a short forward simulation (trajectory construction)
over a horizon 
$T_{\mathrm{look}}$ is performed.
Let $N_\tau$ denote the number of candidate values, $T_{\mathrm{QP}}$ the 
time required to solve one QP, and $T_{\mathrm{sim}}$ the cost of one 
rollout. The overall computational complexity per event is $\mathcal{O}\big(N_\tau (T_{\mathrm{QP}} + T_{\mathrm{sim}})\big)$.
Since $N_\tau$ is typically small and both the QP and rollout are computed 
over short horizons, the proposed method remains computationally efficient 
for real-time implementation. Moreover, the evaluations for different 
candidate $\tau_i$ are independent and can be parallelized, 
significantly reducing the effective computation time per event.

\section{Case Study and Simulations}
\label{sec:Case Study and Simulations}
In this section, we present a case study for the use of aTLC in 
Adaptive Cruise Control (ACC) problems.
All computations are conducted in MATLAB, where the QPs are solved using 
\texttt{quadprog} and the system dynamics are integrated using 
\texttt{ode45}. The simulations are performed on an Intel\textsuperscript{\textregistered} 
Core\texttrademark{} i7-11750F CPU @ 2.50\,GHz, with an average QP computation 
time of less than 0.01\,s.

We consider nonlinear dynamics for the ego vehicle as
\begin{equation}
\label{eq:ACC-dynamics}
\underbrace{\begin{bmatrix}
\dot{z}(t) \\
\dot{v}(t) 
\end{bmatrix}}_{\dot{\boldsymbol{x}}(t)}  
=\underbrace{\begin{bmatrix}
 v_{p}-v(t) \\
 -\frac{1}{M}F_{r}(v(t))
\end{bmatrix}}_{f(\boldsymbol{x}(t))} 
+ \underbrace{\begin{bmatrix}
  0 \\
  \frac{1}{M} 
\end{bmatrix}}_{g(\boldsymbol{x}(t))}u(t),
\end{equation}
where $M$ denotes the mass of the ego vehicle, and $v_p > 0$ is the velocity of the lead vehicle. The variable $z(t)$ represents the distance between ego and the vehicle in front of it. The resistance force is modeled as $F_r(v(t)) = f_0\,\mathrm{sgn}(v(t)) + f_1 v(t) + f_2 v^2(t)$,
as in~\cite{Khalil:1173048}, where $f_0, f_1, f_2$ are positive constants determined empirically, and $v(t) > 0$ denotes the velocity of the ego vehicle. 
Vehicle limitations include constraints on safe distance, speed, and acceleration.

\textbf{Safe distance constraint:} The distance between the two vehicles is considered safe if
$z(t) \geq l_p, \forall t \in [0, T]$,
where $l_p$ denotes the minimum allowable distance.

\textbf{Speed objective:} The ego vehicle aims to achieve a desired speed $v_d > 0$.

\textbf{Acceleration constraint:} The control input $u(t)$ is constrained as $
- c_d M g \leq u(t) \leq c_a M g, \forall t \in [0, T]$,
where $g$ denotes the gravitational constant, and $c_d > 0$ and $c_a > 0$ are the deceleration and acceleration coefficients, respectively. 

The control effort is penalized by the cost functional $
\min_{u(t)} \int_0^T \left( \frac{u(t) - F_r(v(t))}{M} \right)^2 + w\delta^2dt$. The ACC problem is to find a control policy that minimizes control effort while achieving the speed objective, subject to a safe distance constraint and an acceleration constraint.  The relative degree of $z-l_{p}$ is two, and we use
a second order HOCBF, event-triggered TLC and event-triggered aTLC to
implement it by defining $h(\boldsymbol{x}) = z -l_{p} \ge 0$ and
corresponding controls satisfying:
\begin{align}
\text{HOCBF:}\quad &
L_f^2 h(\boldsymbol{x}) + L_g L_f h(\boldsymbol{x})u + \nonumber \\&(p_1 + p_2)L_f h(\boldsymbol{x}) +  p_1 p_2 h(\boldsymbol{x}) \ge 0 ,
\label{eq:hocbf}\\
\text{TLC:}\quad &
L_f^2 h(\boldsymbol{x}) + L_g L_f h(\boldsymbol{x})u +\nonumber \\ &\frac{2}{\tau} L_f h(\boldsymbol{x}) + \frac{2}{\tau^2} h(\boldsymbol{x}) \ge 0 .
\label{eq:tlc}
\end{align}
If $\tau$ in \eqref{eq:tlc} is fixed ($\tau=0.5$), the method is referred to as \emph{event-triggered TLC}. If $\tau$ is time-varying and selected according to Alg.~\ref{alg:adaptive_tau}, it is referred to as \emph{event-triggered aTLC}. For simplicity, the term “event-triggered” is omitted in the remainder of the paper.
We employ a CLF from Def. \ref{def:control-l-f} with relative degree one
to enforce the desired speed as $V(\boldsymbol{x})=v-v_d$. The parameters are $v_{p}=13.89m/s,v(0)=15 m/s, v_{d}=24m/s, M=1650kg, g=9.81m/s^{2}, z(0)=90m, l_{p}=10m, f_{0}=0.1N, f_{1}=5Ns/m, f_{2}=0.25Ns^{2}/m, c_{3}=2, w=10^{5}, \underline{\boldsymbol{x}}=\overline{\boldsymbol{x}}=0.5\cdot \mathcal{I}_{2\times1}, \tau_{\min}=0.05s,\tau_{\max}=2s, T_{\text{look}}=1s,N_\tau=40$.

\begin{figure}[t]
    \centering    
    \vspace*{0.1cm}
    \begin{subfigure}{0.99\linewidth}
        \centering
        \includegraphics[width=\linewidth]{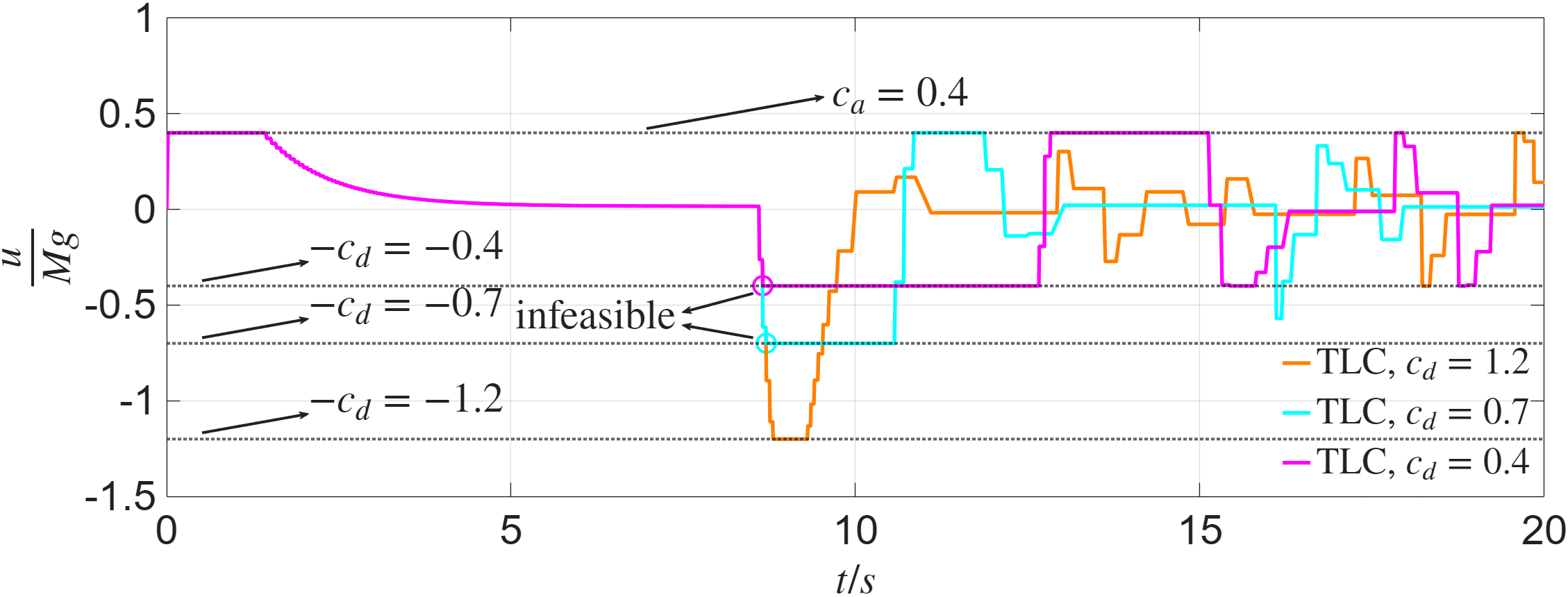}
        \caption{Control Input Profiles under Different $c_d$ (TLC)}
        \label{fig:1}
    \end{subfigure}    
    \vspace{0.2cm}    
    \begin{subfigure}{0.99\linewidth}
        \centering
        \includegraphics[width=\linewidth]{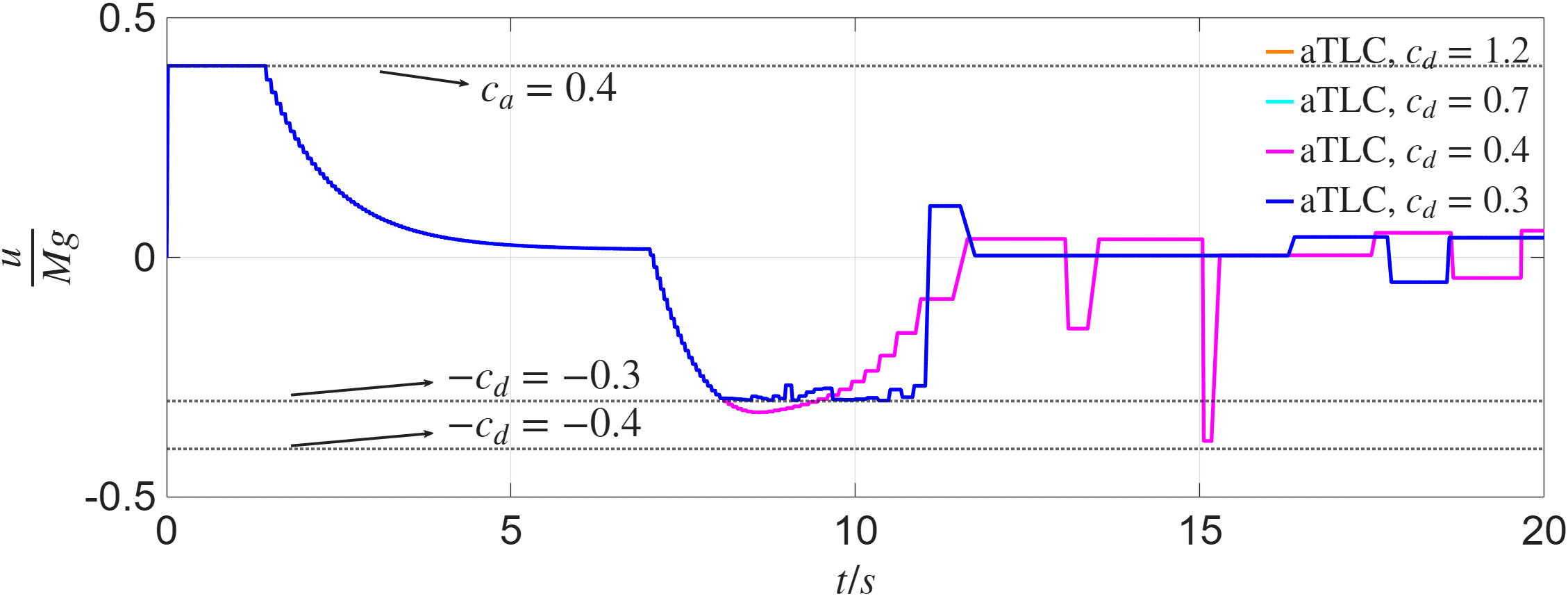}
        \caption{Control Input Profiles (orange, cyan and magenta curves overlap) under Different $c_d$ (aTLC).}
        \label{fig:2}
    \end{subfigure}    
    \vspace{0.2cm}    
    \begin{subfigure}{0.99\linewidth}
        \centering
        \includegraphics[width=\linewidth]{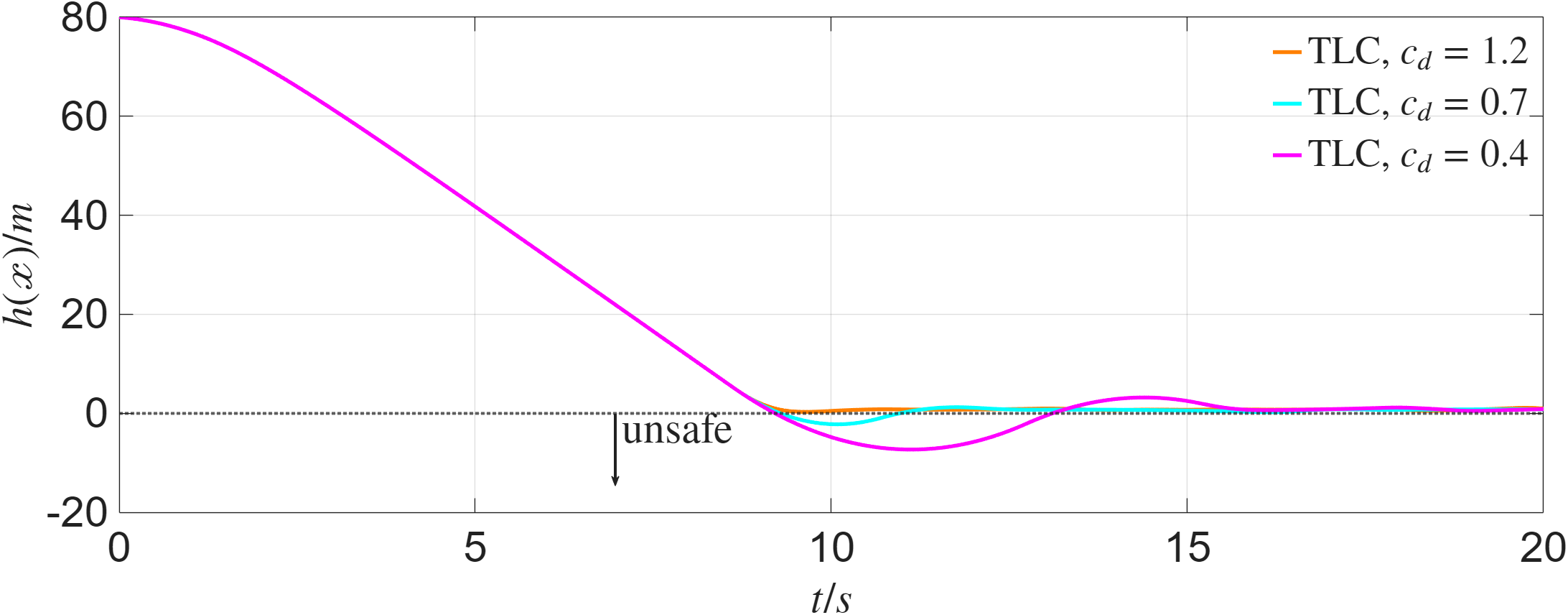}
        \caption{TLC function $h(\boldsymbol{x})$ evolution under Different $c_d$}
        \label{fig:3}
    \end{subfigure}    
    \vspace{0.2cm}    
    \begin{subfigure}{0.98\linewidth}
        \centering
        \includegraphics[width=\linewidth]{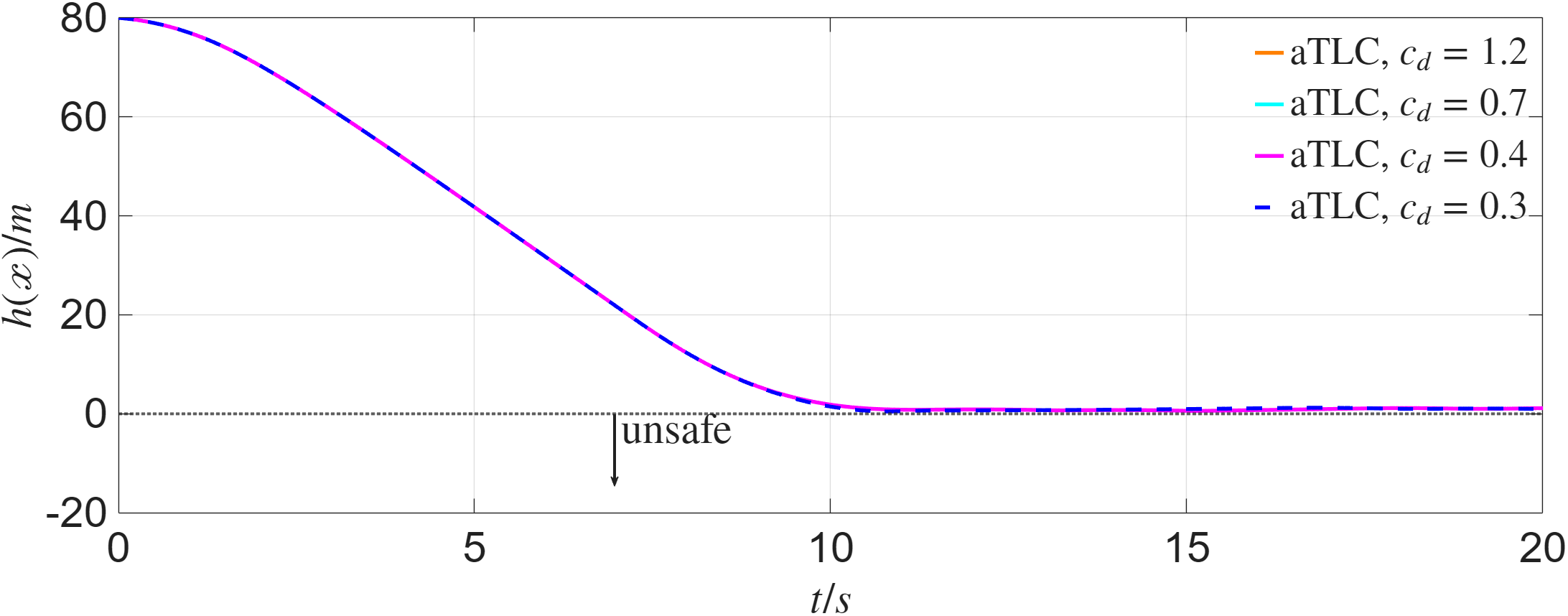}
        \caption{aTLC function $h(\boldsymbol{x})$ evolution under Different $c_d$}
        \label{fig:4}
    \end{subfigure}    
    \caption{Performance Comparison between TLC and aTLC in ACC. aTLC achieves improved feasibility while maintaining safety (i.e., avoiding violations of $h(\boldsymbol{x})\ge 0$) compared to TLC when $c_d$ is small.}
    \label{fig:4_vertical}
\end{figure}

In Fig. \ref{fig:4_vertical}, we compare the performance of TLC and aTLC under narrow control bounds. Since the ego vehicle must reach the desired speed while maintaining a safe distance from the lead vehicle, the deceleration capability is critical. A smaller $c_d$, the deceleration coefficient,
corresponds to a more slippery road condition and weaker braking capability, requiring the ego vehicle to decelerate in a timely manner to avoid safety violations.
As shown in Fig.~\ref{fig:1} and Fig.~\ref{fig:3}, TLC ensures QP feasibility and safety when $c_d = 1.2$. However, as $c_d$ decreases to $0.7$ and $0.4$, the QP becomes infeasible because no control input can satisfy both the TLC condition and the input bounds (marked by circles in the figure). In such cases, the control input is set to the maximum braking value $u = -c_d M g$ until the QP becomes feasible again. 

Fig.~\ref{fig:3} shows that, under this fallback strategy, the ego vehicle fails to maintain a safe distance, i.e., $h(\boldsymbol{x}) < 0$.
In contrast, Fig.~\ref{fig:2} and Fig.~\ref{fig:4} show that aTLC enables earlier deceleration, thereby avoiding infeasibility and safety violations. Note that Alg.~\ref{alg:adaptive_tau} selects the time scale from the candidate set to maximize the safety margin, leading to overlapping trajectories for $c_d = 1.2$, $0.7$, and $0.4$. Moreover, even when $c_d$ is further reduced to $0.3$, aTLC still finds a feasible and safe control strategy. The input profiles also show that, after $t \approx 10s$, i.e., when $h(\boldsymbol{x})$ approaches the safe-set boundary, the control input generated by aTLC varies more smoothly and stays closer to zero than that of TLC, indicating lower control effort.

In Fig.~\ref{fig:5_vertical}, we compare the performance of HOCBF, 
TLC, and aTLC under limited braking capability ($c_d = 0.4$). For HOCBF, two sets of parameters are considered, corresponding to different choices of $p_1$ and $p_2$. 
From Fig.~\ref{fig:6}, larger values of $p_1$ and $p_2$ lead to a more aggressive control strategy (i.e., delayed braking), which results in QP infeasibility around $t \approx 8\,\mathrm{s}$ (indicated by the orange circle). Similarly, for TLC with a fixed $\tau = 0.5$, the lack of adaptability also leads to infeasibility at approximately the same time (magenta circle). In both cases, when the QP becomes infeasible, a fallback control strategy is applied by setting the input to the maximum braking value $u = -c_d M g$ until feasibility is recovered. As shown in Fig.~\ref{fig:7}, this leads to safety violation with $h(\boldsymbol{x}) < 0$.
In contrast, reducing $p_1$ and $p_2$ in HOCBF, or adopting aTLC with adaptive $\tau$, maintains QP feasibility and guarantees safety. As illustrated in Fig.~\ref{fig:5}, smaller values of $p_1$ and $p_2$ make the HOCBF controller more conservative, resulting in earlier deceleration after reaching the desired speed $v_d$ to maintain a safe distance. A similar behavior is observed for aTLC, where the vehicle gradually slows down until its speed matches that of the lead vehicle $v_p$, after which the safety distance remains nearly constant. Notably, aTLC achieves performance comparable to HOCBF while tuning only a single parameter $\tau$.
Fig.~\ref{fig:8} compares the time-varying $\tau$ in aTLC with the fixed $\tau$ in TLC, while Fig.~\ref{fig:9} shows the evolution of the event-triggered inter-event time $\Delta t_{k} = t_{k+1} - t_k$ for both methods. It can be seen that aTLC flexibly adjusts $\tau$ within a prescribed range to satisfy feasibility and safety requirements. As a result, after approximately $t \approx 10s$, aTLC exhibits significantly fewer triggering events than TLC. This also leads to smoother control inputs (Fig.~\ref{fig:6}) and smoother velocity profiles (Fig.~\ref{fig:5}).

\begin{figure}[t]
    \centering    
    \vspace*{0.1cm}
    \begin{subfigure}{0.99\linewidth}
        \centering
        \includegraphics[width=\linewidth]{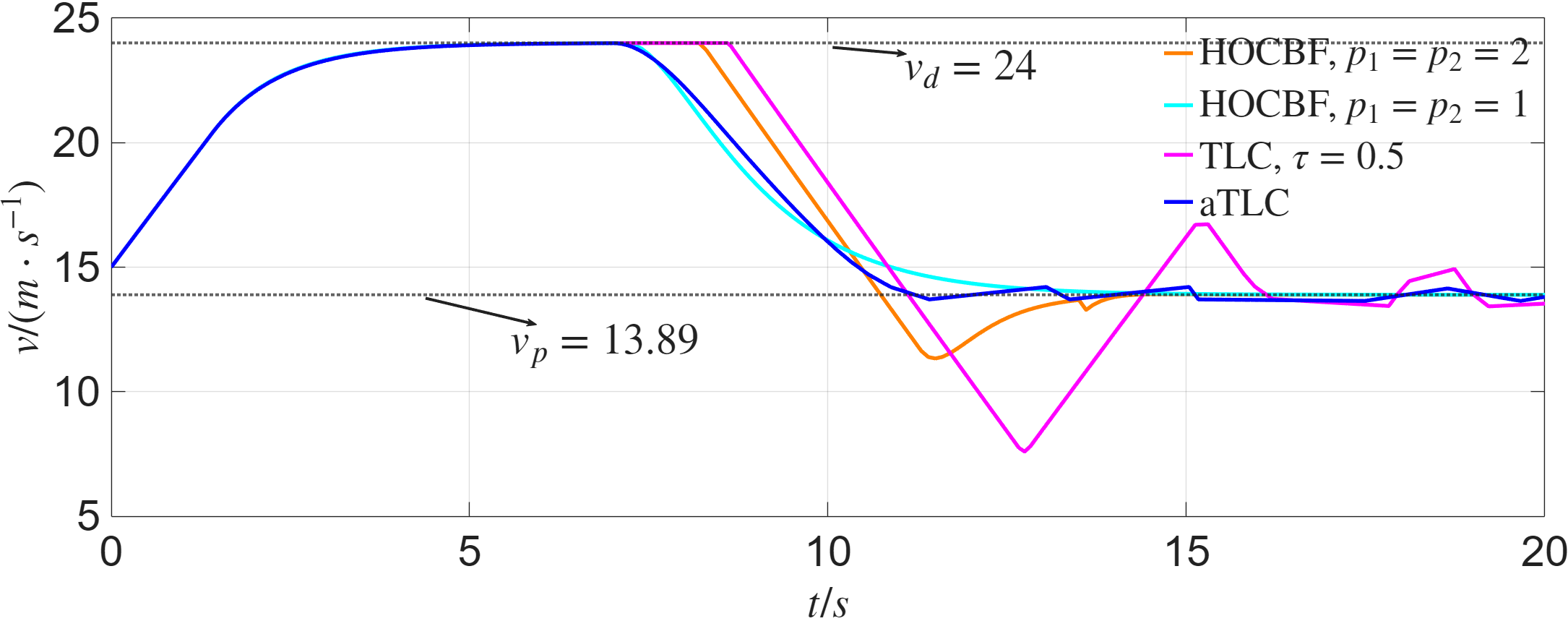}
        \caption{Velocity profiles under different methods when $c_d=0.4$}
        \label{fig:5}
    \end{subfigure}    
    \vspace{0.2cm}    
    \begin{subfigure}{0.99\linewidth}
        \centering
        \includegraphics[width=\linewidth]{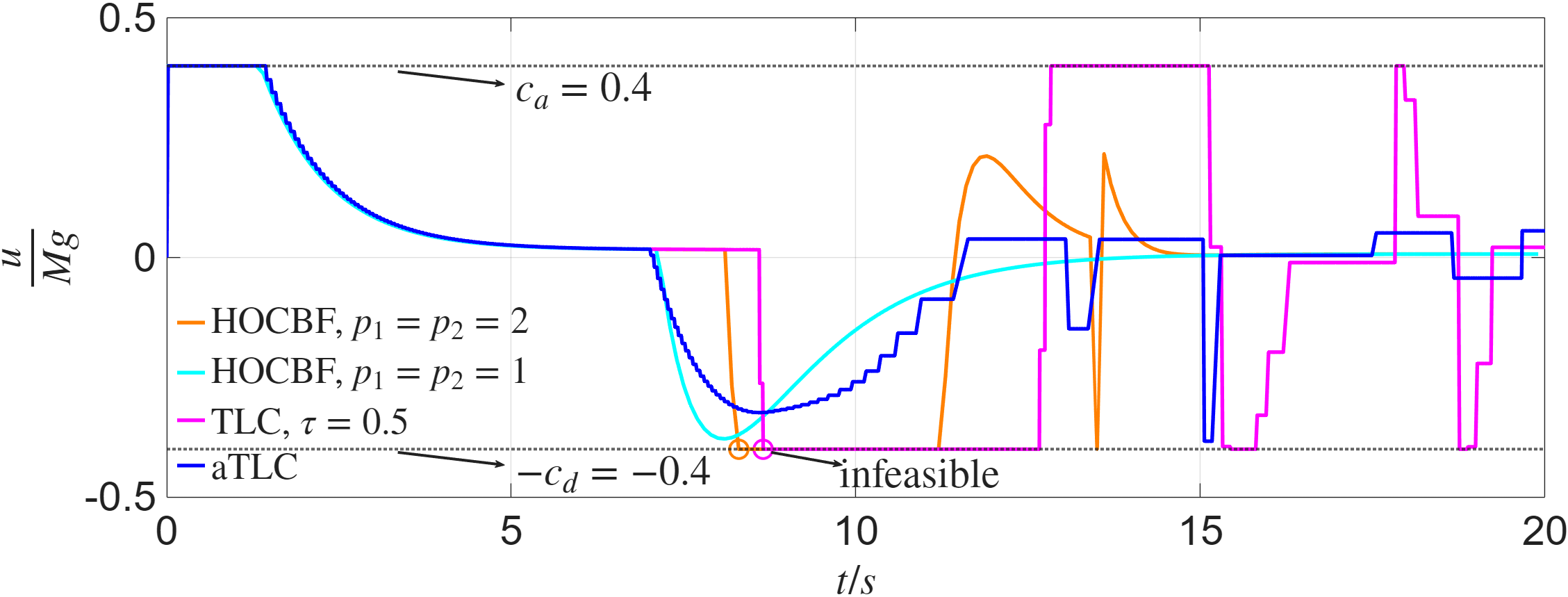}
        \caption{Control input profiles under different methods when $c_d=0.4$}
        \label{fig:6}
    \end{subfigure}    
    \vspace{0.2cm}    
    \begin{subfigure}{0.99\linewidth}
        \centering
        \includegraphics[width=\linewidth]{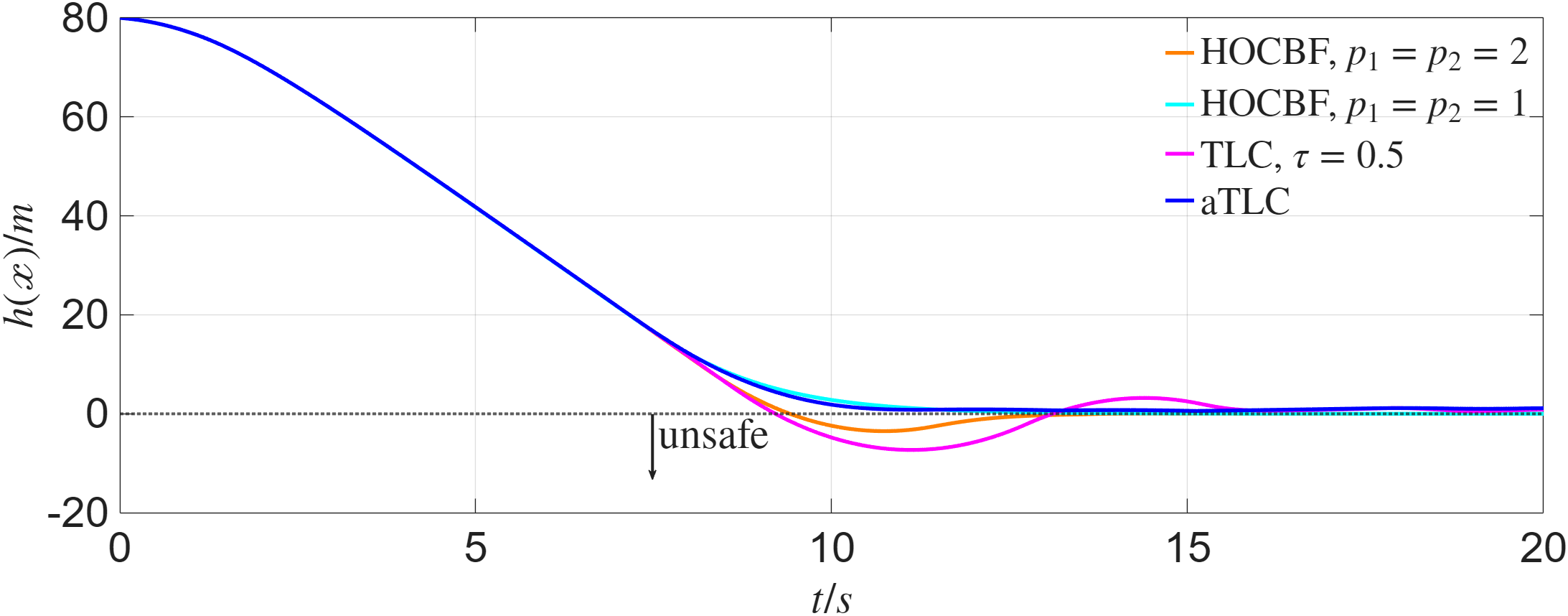}
        \caption{Safety function $h(\boldsymbol{x})$ evolution under different methods when $c_d=0.4$}
        \label{fig:7}
    \end{subfigure}    
    \vspace{0.2cm}    
    \begin{subfigure}{0.97\linewidth}
        \centering
        \includegraphics[width=\linewidth]{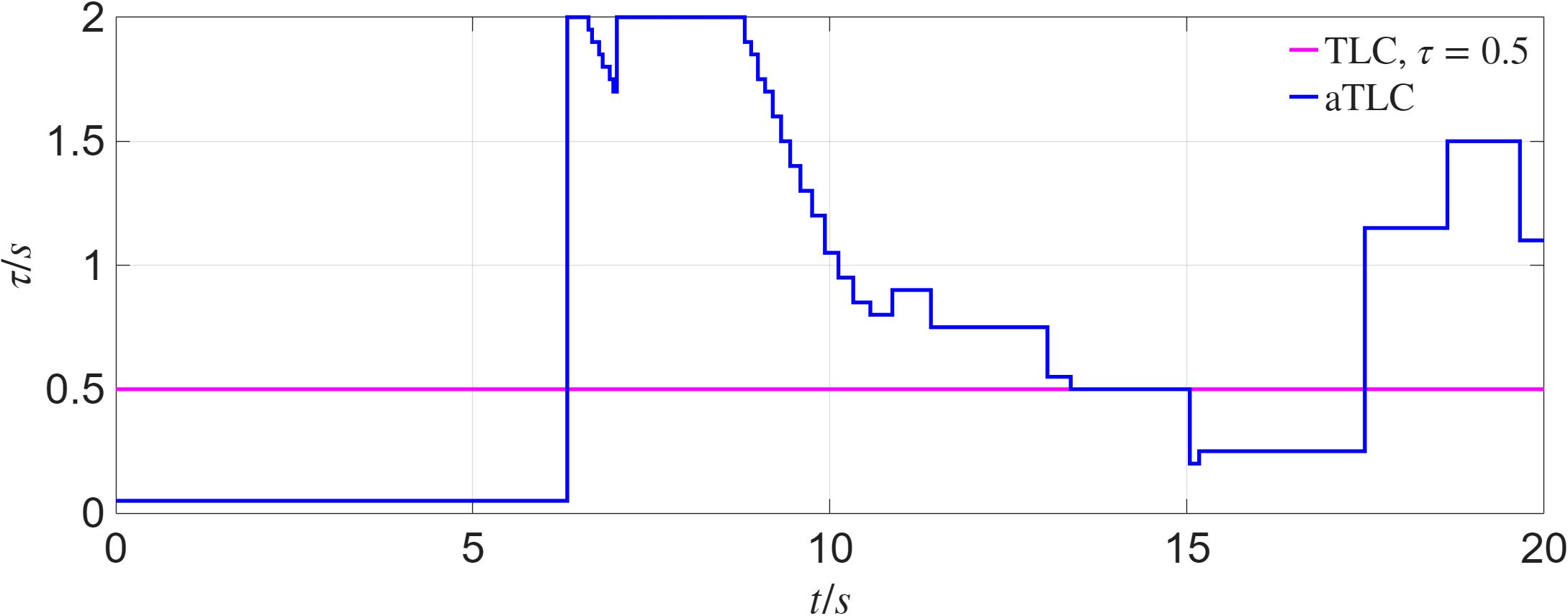}
        \caption{Adaptive $\tau(t)$ in aTLC vs. fixed $\tau$ in TLC when $c_d=0.4$}
        \label{fig:8}
    \end{subfigure}   
    \vspace{0.2cm}    
    \begin{subfigure}{0.99\linewidth}
        \centering
        \includegraphics[width=\linewidth]{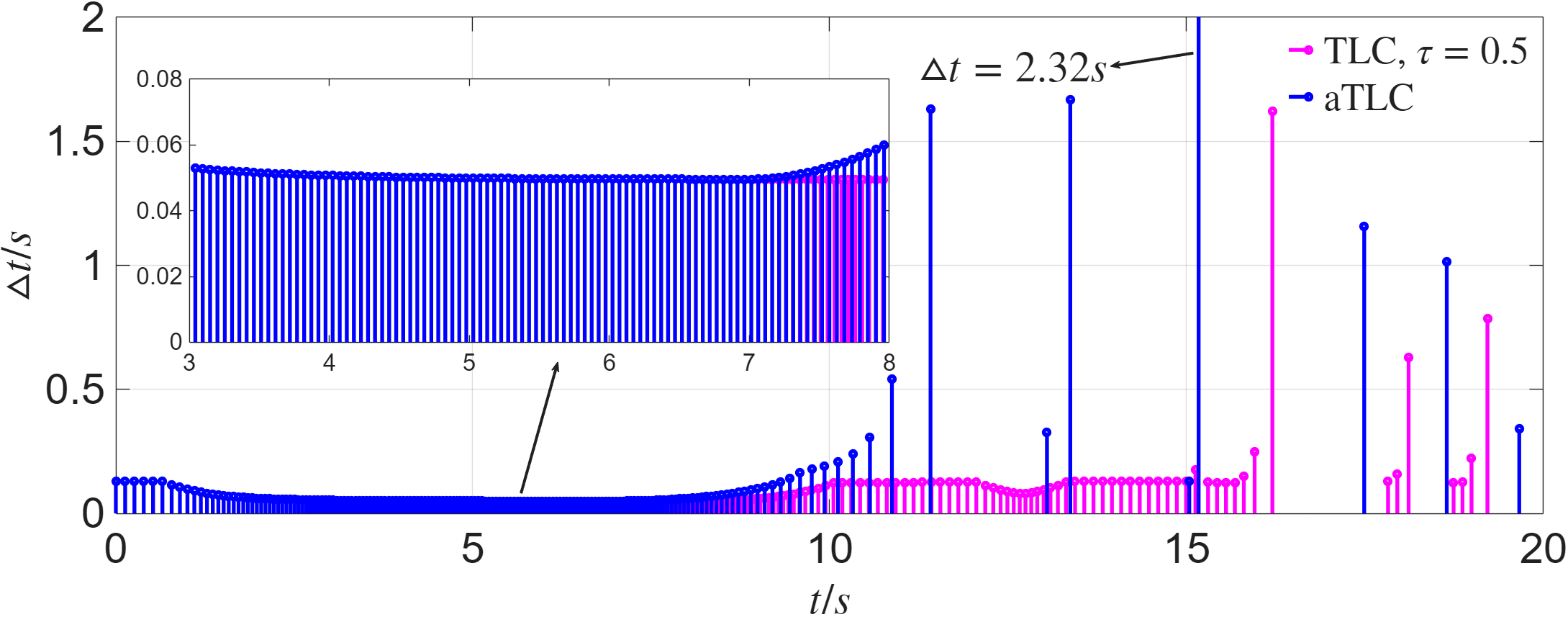}
        \caption{Inter-event time $\Delta t$ under event-triggered implementation for $c_d=0.4$}
        \label{fig:9}
    \end{subfigure}    
    \caption{aTLC improves feasibility and ensures safety compared to TLC, while achieving performance comparable to a well-tuned HOCBF despite requiring only a single parameter..}
    \label{fig:5_vertical}
\end{figure}

\section{Conclusion and Future Work}
\label{sec:conclusion}
This paper proposes an adaptive Taylor--Lagrange Control (aTLC) framework for safety-critical control of nonlinear systems under sampled-data implementations. By treating the time scale as a state-dependent parameter selected online, the proposed method improves feasibility and safety compared to non-adaptive TLC.
An event-triggered implementation is developed to mitigate inter-sampling effects, and a rollout-based selection rule is introduced to balance safety and feasibility while preserving the QP structure.
Simulation results on an adaptive cruise control problem demonstrated that aTLC achieves improved feasibility, maintains safety under limited control bounds, and produces smoother control inputs compared to non-adaptive TLC.
Future work will focus on extending the proposed framework to systems with model uncertainty, learning-based adaptation of the time scale, and experimental validation on real-world platforms.
\label{sec:Conclusion and Future Work}

\bibliographystyle{IEEEtran}
\balance
\bibliography{references.bib}
\end{document}